\newcommand{\R}{{\mathbb R}}
\newcommand{\LLR}{{\rm LLR}}
\newcommand{\mc}{\mathcal }
\newcommand{\ep}{\epsilon}
\newcommand{\bi}{\begin{itemize}}
\newcommand{\ei}{\end{itemize}}
\newcommand{\be}{\begin{enumerate}}
\newcommand{\ee}{\end{enumerate}}
\newcommand{\bv}[2]{\left( \begin{array}{c} #1 \\ #2 \end{array} \right)}
\newcommand{\hm}{{H^-}}
\newcommand{\hp}{{H^+}}
\newcommand{\hpm}{{H^{\pm}}}
\newcommand{\defi}[1]{\emph{#1}}
\newcommand{\dec}{{\mathrm{Soc}}}
\newcommand{\ob}{{\mathrm{Priv}}}
\newcommand{\priv}{{\rm Priv}}
\newcommand{\soc}{{\rm Soc}}
\newcommand{\vb}{\mathbf v}
\newcommand{\obb}[2]{{\ob^{(#1)}_{#2}}}
\newcommand{\thp}{{\theta_+}}
\newcommand{\thm}{{\theta_-}}
\definecolor{darkgreen}{rgb}{0,0.4,0}
\definecolor{psimon}{rgb}{0.5,0.3,0.3}
\title{Bayesian evidence accumulation on social networks}
\author{Bhargav Karamched\thanks{Department of Mathematics, University of Houston, Houston TX 77204 ({\tt bhargav@math.uh.edu})}, \and
Simon Stolarczyk\thanks{Department of Mathematics, University of Houston, Houston TX 77204 ({\tt simon.stolarczyk@gmail.com})}, \and
Zachary P. Kilpatrick\thanks{Department of Applied Mathematics, University of Colorado, Boulder CO 80309 ({\tt zpkilpat@colorado.edu})}, \and
Kre\v{s}imir Josi\'{c}\thanks{Departments of Mathematics, and Biology and Biochemistry, University of Houston, Houston TX 77204, Department of BioSciences, Rice University, Houston TX 77005 ({\tt josic@math.uh.edu}); Bhargav Karamched and Simon Stolarczyk are co-first authors; Zachary Kilpatrick and Kre\v{s}imir Josi\'{c} are co-last authors.}}
\begin{document}

\nolinenumbers

\maketitle

\begin{abstract}
To make decisions we are guided by the evidence we collect and the opinions of friends and neighbors.
How do we combine our private beliefs with information we obtain from our  social network? To understand the strategies humans use to do so, it is useful to compare them to observers that optimally integrate all evidence. Here we derive network models of rational (Bayes optimal) agents who accumulate private measurements and observe decisions of their neighbors to make an irreversible choice between two options. The resulting information exchange dynamics has
interesting properties: When decision thresholds are asymmetric, the absence of a decision can be increasingly informative over time. In a recurrent network of two agents, an absence of a decision can lead to a sequence of belief updates akin to those in the literature on \emph{common knowledge}.  On the other hand, in larger networks a single decision can trigger a cascade of agreements and disagreements that depend on the private information agents have gathered. Our approach provides a bridge between social decision making models in the economics literature, which largely ignore the  temporal dynamics of decisions, and the single-observer evidence accumulator models used widely in neuroscience and psychology.
\end{abstract}

\begin{keywords}
decision-making; probabilistic inference; social networks; stochastic dynamics; first passage time problems
\end{keywords}

\section{Introduction}

Understanding how organisms use sensory and social information to make decisions is of fundamental interest in biology, sociology, and economics~\cite{Frith2012,Couzin2009,Edwards54}.  Psychologists and neuroscientists have developed a variety of experimental approaches to probe how humans and other animals make choices.  Popular are variants of the two-alternative forced choice task (2AFC) where an observer is asked to decide between two options based on information obtained from one or more noisy observations~\cite{Bogacz2006,Saaty90,Gold2007,Ratcliff2008}.  The 2AFC task has motivated several mathematical models that successfully explain how humans use sequentially-presented evidence to make decisions~\cite{Busemeyer1993, Usher2001,degroot2005optimal}.  

Most such evidence accumulation models take the form of drift-diffusion stochastic processes that describe the information gathered by  lone observers~\cite{Bogacz2006}.  However, humans and many other animals do not live in isolation and may consider their neighbors' behavior as they make decisions.  Animals watch each other as they forage~\cite{Miller2013,Krajbich2015}.  Stock traders, while not privy to all of their competitor's information,  can observe each other's decisions.  To make the best choices many  biological agents thus also take into account the observed choices of others~\cite{Raafat2009,Moussaid2010,Park17}. 

{Key to understanding the collective behavior of social organisms is uncovering how they gather and exchange information to make decisions~\cite{Couzin2009,Frith2012}. Mathematical models allow us to quantify how different evidence-accumulation strategies impact experimentally observable characteristics of the  decision-making process such as the speed and accuracy of choices~\cite{bogacz2010} and the level of agreement among members of a collective~\cite{conradt2005}.  Such models can thus lead the way towards understanding the decisions and disagreements that emerge in social groups.}

Here we address the question of how idealized rational agents in a social network should combine a sequence of private measurements with observed decisions of other individuals to choose between two options.  We refer to the information an agent receives from its neighbors as \defi{social information} and information available only to the agent as \defi{private information}.  As agents  do not share their private information with others directly, they only reveal their beliefs through their choices. These choices are based on private and social information and thus reveal something about the total evidence an agent has collected.

We assume that  private measurements and, as a consequence, observed choices can improve the odds of making the right decision. However neither type of information affords certainty about which  choice is correct. We take a probabilistic (Bayesian) approach to describe the behavior of rational agents who make \emph{immutable} decisions once they have accrued sufficient evidence. {Previous models of this type are not normative, as the belief update in response to an observed decision is a parameter~\cite{caginalp2017decision}. Hence such models cannot describe the interplay between the decisions, non-decisions, and the belief updates we discuss here.}


There are two reasons for assuming that agents only share their decision: first, many organisms  communicate their decisions but not the evidence they used to reach them. For example, herding animals in motion can only communicate their chosen direction of movement~\cite{Couzin2005,Nagy2010}. Foraging animals may communicate their preferred feeding locations~\cite{Seeley1991,Franks2006} but not the evidence or process they used to decide. Traders can see their competitor's choice to buy or sell a stock but may not have access to the evidence that {leads} to these actions.  Second, if agents communicate all information they gathered to their neighbors, the problem is mathematically trivial as every agent 
obtains all evidence provided to the entire network.

The behavior and performance of rational agents in a social network can depend sensitively on how information is shared~\cite{Acemoglu2011,mossel2014}. In some cases rational agents perform better in isolation than as part of a social network, even when they use all available information to decide on the most likely of several options.  This can happen when social information from a small subset of agents dominates the decisions of the collective, as in classical examples of \emph{herding behavior}~\cite{banerjee1992,Gale2003}.  {In recurrent networks, on the other hand, agents can repeatedly share their beliefs with one another and asymptotically come to agree on the most probable choice given the evidence each obtains~\cite{aumann1976,geanakoplos1994,Geanakoplos1982,mossel2014}.  Often all agents are  able to assimilate all private information in the network just by observing the preferences of their neighbors leading to \emph{asymptotic learning}.} 

{
We show that such asymptotic learning typically does not occur in our model, as the  decisions are immutable.  We describe what information
rational agents can obtain from observing the decision states of their neighbors.   Interestingly when decision thresholds are asymmetric, the absence of a decision in recurrent
networks can lead to a recursive belief update akin to
that addressed in the literature on  {\em{common knowledge}}~\cite{aumann1976,geanakoplos1994,milgrom1982}.  We also show how a rational agent who only observes a portion of the network must  marginalize over the decision states of all unobserved agents to correctly integrate an observed decision. }

{
Social evidence exchange in larger networks becomes tractable when all agents can observe each other.  We show that
in such networks the first decision can lead to a wave of agreements. However, the agents who remain undecided
can obtain further information by counting how many others have made a decisions and how many remain on the fence.
We show that this process can be self-correcting, and  lead to quicker and more accurate decisions compared to lone agents.}

\section{Definitions and setup}\hspace{-2.5mm}\footnote{For the convenience of the reader a table of notation and symbols is available at  \url{https://github.com/Bargo727/NetworkDecisions.git}, along with the code used to generate all figures.}
We consider a model of a social network in which all agents are deciding between two options or hypotheses.
To make a choice, agents gather both private (${\rm Priv}$) and social (${\rm Soc}$) information over time. We assume that private information comes from a sequence of noisy observations (measurements). In addition we assume that agents also gather social information by continuously observing each other's decisions. For instance, while foraging, animals make private observations but also observe each other's decisions\cite{Couzin05,Miller2013}. When deciding on whether or not to purchase an item or which of two candidates to vote for people will rely on their own research but are also influenced by the decisions of their friends and acquaintances~\cite{Acemoglu2011} as well as opinions on social networking sites~\cite{Watts2002}.
In many of these situations agents do not share all information they gathered directly but only observe their neighbors' choices or the absence of such choices (\emph{e.g.} not purchasing an item, or not going to the polls). 

More precisely, we consider a set of agents who accumulate evidence to decide between two states, or hypotheses, $\hp$ or $\hm$.
Each agent is rational (Bayesian): they compute the probability that either hypothesis holds based on all evidence they accrued. The agents make a decision once the conditional probability of one hypothesis, given all the accumulated observations, crosses a predetermined threshold~\cite{Wald1948,Bogacz2006}.  For simplicity, we  assume that all agents in the network are identical, but discuss how this condition can be relaxed.


{\bf Evidence accumulation for a single agent.} The problem of a single agent accumulating private evidence to decide between two options has been thoroughly studied~\cite{Wald1948,Usher2001,Gold2007,Bogacz2006,Ratcliff2008,radillo17,veliz16}. In the simplest setting  an agent makes a sequence of noisy observations, $\xi_{1:t}$ with $\xi_i \in \Xi, i \in \{1, \ldots, t\}$, and $\Xi \subset \R$ finite. The observations, $\xi_i,$ are independent and  identically distributed, conditioned on the true state $H \in \{\hp,\hm\}$,
\[
P(\xi_{1:t} | H^{\pm} ) = \prod_{i=1}^t P(\xi_i | H^{\pm} ) =   \prod_{i=1}^t f_{\pm} ( \xi_i) , 
\]
where the probability of each measurement is given by the probability mass functions $f_{\pm}(\xi) := P(\xi | H^{\pm})$. Observations, $\xi_i,$ are drawn from the same set $\Xi$ in either state $H^{\pm}$, and the two states are distinguished by the differences in probabilities of making certain measurements.   

 To compute  $P(H^{\pm}| \xi_{1:t}) $ the agent uses Bayes' Rule:  for simplicity, we assume that the agent knows the measurement distributions $f_{\pm}(\xi)$ and uses a flat prior,  $P(\hp) = P(\hm) = 1/2$. Thus,  the log  likelihood ratio (LLR) of the two states at time $t$ is
\begin{align} \label{E:one_agent}
y_t & := \log \left( \frac{P( \hp | \xi_{1:t} )}{P( \hm | \xi_{1:t} )} \right) = \sum_{s = 1}^{t} \LLR (  \xi_s) = y_{t - 1} + \LLR (  \xi_t),
\end{align}
{where we define $\LLR(\cdot) = \log \frac{P(  \cdot  | \hp ) }{P( \cdot | \hm )} $; such log likelihood ratios will be used extensively in what follows.} Also note that $y_0 = 0$, since both states are equally likely \emph{a priori}.   We also refer to the LLR as the \emph{belief} of the agent. 

An ideal agent continues making observations while $\thm < y_t < \thp $, and makes a decision after acquiring sufficient evidence,  choosing $\hp$ ($\hm$) once $y_t \geq \thp$ ($y_t \leq \thm$). We assume $\thm < 0 < \thp$.

\section{Multiple agents} 
\label{sec:uni}

In social networks not all agents will communicate with each other due to limitations in bandwidth or physical distances~\cite{Lazer2009,Newman2003,Yan2006}. To model such exchanges, we identify agents with a set of vertices, $V = \{1,...,N\}$, and communication between agents with a set of directed edges, $E,$ between these vertices~\cite{Goyal2012}. Information flows along the direction of the arrow (Fig.~\ref{fig:uni}a), so agents observe neighbors in the direction opposite to that of the arrows.

As in the case of a single observer, we assume that each agent, $i$,  makes a sequence of noisy, identically-distributed measurements, $\xi^{(i)}_{1:t},$  from a state--dependent distribution, $f_{\pm}(\xi^{(i)}_j) = P(\xi^{(i)}_j | H^{\pm})$.  We assume that the observations are independent in time and between agents, conditioned on the state, $P(\xi^{(i_1)}_{1:t}, \ldots, \xi^{(i_k)}_{1:t} |H) =   \Pi_{l = 1}^k  \Pi_{j = 1}^t  P(\xi^{(i_l)}_{j}|H)$ for any sequence of measurements, $\xi^{(i)}_{1:t}$, and set of agents, $i_1, \ldots, i_k$, in the network. This  conditional independence of incoming evidence  simplifies calculations, but is unlikely to hold in practice.  However, humans often treat redundant information as  uncorrelated, thus making the same independence assumption as we do in our model~\cite{enke2013correlation,levy2015correlation} {(See Section~\ref{sec:discussion}).}

Each agent gathers social evidence by observing whether its neighbors have made a decision and what that decision is.  Each agent thus gathers private and social evidence and makes a decision when its belief (LLR) about the two states crosses one of the thresholds, $\thm<0<\thp$. Importantly, \emph{once an agent has made a decision, it cannot change it.}   The absence of a decision thus communicates that an agent has not gathered sufficient evidence to make a choice and hence that this agent's belief is still in the interval $(\thm,\thp)$.

For simplicity, we assume  that the measurement distributions, $f_{\pm},$ and thresholds, $\theta_{\pm},$ are identical across agents. The theory is similar if  agents have different, but known, measurement distributions.  The assumption that the  distributions, $f_{\pm}(\xi),$ are discrete simplifies some convergence arguments. However, most evidence accumulation models take the form of continuous, drift-diffusion stochastic processes~\cite{Bogacz2006}.  These models  take the form of stochastic differential equations (SDEs) and approximate the discrete model well when many observations are required to reach a decision~\cite{veliz16}. Such SDEs have been remarkably successful in describing the responses of humans and other animals under a variety of conditions~\cite{Ratcliff:1998}.  

{\bf Evidence accumulation with two agents.} To illustrate how an agent integrates private and social information to reach a decision, we use the example network shown in Fig.~\ref{fig:uni}a.  

\begin{figure}
\centering 
\includegraphics[width = 12cm]{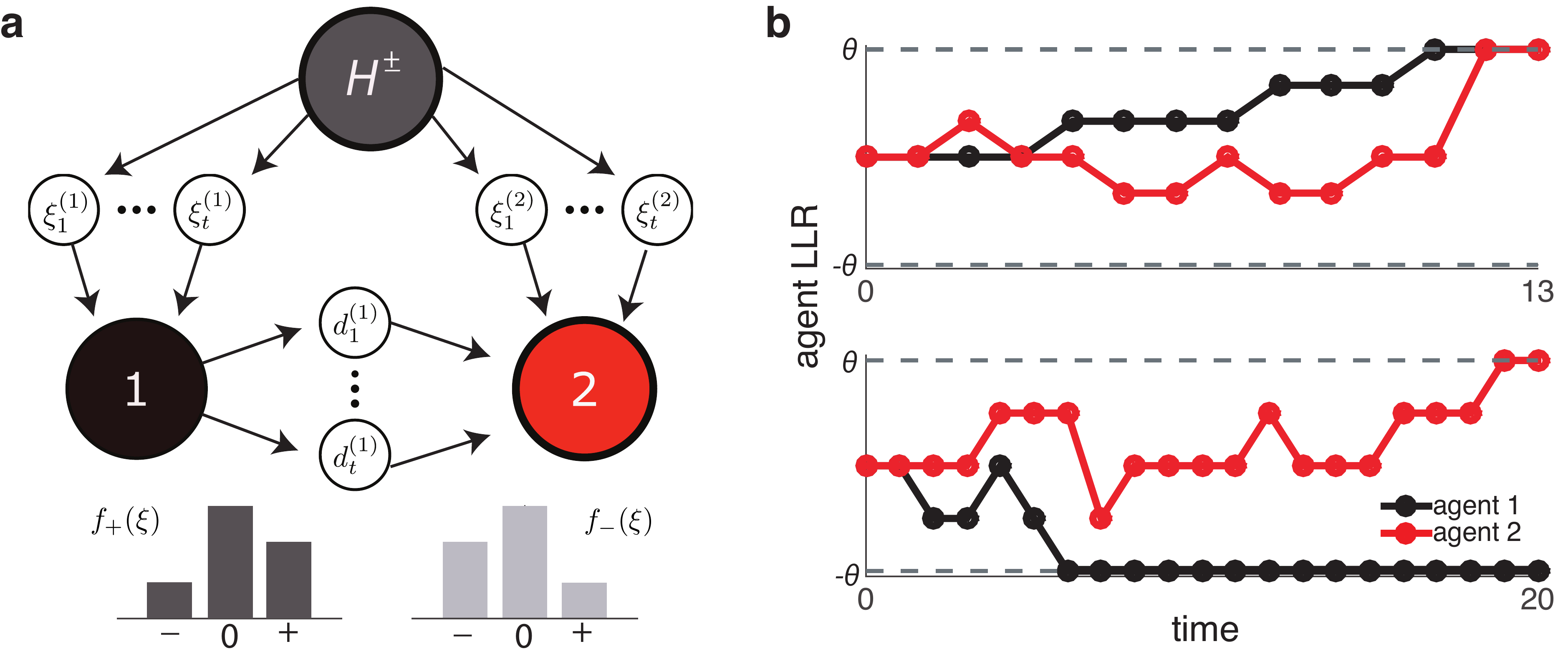}
\caption{A pair of unidirectionally coupled agents deciding between the states $H^+$ and $H^-$ on the basis of private and social evidence.  (a) Schematic of the information flow in the network. Agent 1 accumulates their own observations, $\xi^{(1)}_{1:t},$ which result in a sequence of decision states, $d^{(1)}_{1:t},$ that is observed by agent 2. In addition, agent 2 gathers its own observations, $\xi^{(2)}_{1:t},$ to make a decision.  (b) Sample trajectories for the beliefs (LLRs) of the agents.  Decisions are made when an agent's belief  crosses a threshold, $\theta_{\pm} = \pm \theta$ in this case. A decision of agent 1 leads to a  jump in the belief of agent 2. {Here, $H = H^+$}.\vspace{-.2cm}}
\label{fig:uni}
\end{figure}

Let $I_t^{(i)}$ be the total information available to agent $i$ at time $t$. The information available to agent 1,  $I_t^{(1)},$ consists only of private observations. However, agent 2 makes private observations and obtains social information from agent 1, which jointly constitute $I_t^{(2)}$.   Agents base their decisions on the computed LLR, or belief, $y_t^{(i)}= \log \left[ P(\hp | I_t^{(i)}) / P(\hm |  I_t^{(i)}) \right] $, and agent $i$ makes a choice at  time $T^{(i)}$ at which $y_t^{(i)}$ crosses  threshold $\theta_+$ or $\theta_-$.

Since $I_t^{(1)} = \xi_{1:t}^{(1)}$, the belief of agent 1 is described by Eq.~\eqref{E:one_agent}.  At each time $t$, agent 2 observes the  resulting \defi{decision state}, $d_t^{(1)},$ of agent 1 (Fig. \ref{fig:uni}a), where
\begin{equation} \label{E:decision-state}
d_t^{(1)} = \begin{cases} 
	-1, & y_t^{(1)} \leq  \thm, \\ 
	0, & y_t^{(1)} \in (\thm , \thp), \\ 
    1, & y_t^{(1)} \geq \thp.
    \end{cases}
\end{equation}
The decision state captures whether agent 1 made a decision by time $t$ and, if so, what that decision was.  

 Assuming that the two agents make private observations synchronously, the information available to agent 2 at time $t$ consists of its private observations and observed decision states of agent 1,  $I_t^{(2)} = (\xi_{1:t}^{(2)}, d_{1:t}^{(1)})$.  {Importantly, we assume that an observer makes a private observation and then integrates all available social information {\em{before}} making its next private observation.}  Hence, evidence accumulation at time $t$ consists of two steps:  Agents first updates their belief in response to a private observation. They then observe the decision states of other visible agents and update their belief again.  {We will see that in recurrent networks the exchange of social information  may occur over several steps (see Sections~\ref{sec:bidsymm},\ref{sec:3cliq},\ref{sec:cliq}).  
} 

 {
We let $\soc_t^{(i)}$ be the social information that can be obtained by observing the decisions of agent $i$ up to time $t$ so that
\begin{equation} \label{E:soc}
 \soc_t^{(1)} = \LLR ( d_{1:t}^{(1)} )
\end{equation}
}The stochastic process $\priv_{1:t}$ is defined in the same way for both agents and is equivalent to that of a lone observer given in Eq.~\eqref{E:one_agent}.

{
Belief updates rely on the assumption that the private observations of the two agents are conditionally independent and a straightforward application of Bayes' Theorem. In the present example each private observation leads agent 2 to update its belief to $y_{t,0}^{(2)}$ by adding Priv$^{(2)}_{t}$.  After next observing the decision state of agent 1, the belief of agent 2 is a sum of the LLR corresponding to private observations (Priv$^{(2)}_{1:t}$) and the LLR corresponding to the social information (Soc$_{t}^{(1)}$):
\begin{align} 
 y_{t,1}^{(2)} & = \log \frac{ P(\hp | \xi_{1:t}^{(2)}, d_{1:t}^{(1)})}{P(\hm |  \xi_{1:t}^{(2)}, d_{1:t}^{(1)})} =
 \underbrace{\sum_{l = 1}^t \LLR (\xi_l^{(2)})}_{\priv^{(2)}_{1:t}} + 
\underbrace{\LLR ( d_{1:t}^{(1)})}_{\soc_t^{(1)} }.
\label{E:decomposition}
\end{align}
As decisions are immutable, the only information about agent 1's decision that agent 2 can use after a decision at time $T$ is that  agent 1's decision state switched from $d_{T-1}^{(1)} = 0$ to $d_{T}^{(1)} = \pm 1$ at $t = T$.}

\section{Social Information}
\label{sec:unidir}

We next ask how much evidence is provided by observing the decision state of other agents in the directed network in Fig.~\ref{fig:uni}a and illustrate how non-decisions, $d_t^{(1)} = 0$, can be informative. 

{\bf Decision Evidence} Suppose agent 1 chooses $\hp$ at time $T$ so that the belief of agent 2 at time $t \geq T$ is $y_{t,1}^{(2)} = \priv_{1:t}^{(2)} + \soc_t^{(1)} = 
\priv_{1:t}^{(2)} + \soc_{T}^{(1)}$.
If  $d_{T}^{(1)} = 1$, then agent 2 knows that  $y_{T,0}^{(1)} \geq \thp$. Agent 1 has reached its decision based on private observations; hence, none of the social information obtained by agent 2 is redundant. 

The belief of agent 1 at the time of the decision, $T,$ could exceed threshold.  However, this excess is small if the evidence obtained from each measurement is small.  The following proposition shows that, after observing a decision, agent 2 updates its belief by an amount close to $\thp$ (Fig. \ref{fig:uni}b).  

\begin{proposition} 
\label{prop:bump}
{If agent 1 decides at time $T$, and $d_{T}^{(1)} = 1,$ then
\[
\theta_{+}  \leq \soc_{T} < \theta_{+} + \ep_{T}^{+}, \ \ \ \ \text{where} \ \ \ \
 0<\ep_{T}^{+} \leq  \sup_{\xi_{1:T} \in {\mc C}_+(T)}\LLR (\xi_{T}),
 \]
and ${\mc C}_+(T)$ is the set of all chains of observations, $\xi_{1:T},$  that trigger a decision in favor of  $H^{+}$ precisely at time $T$.  An analogous result holds when $d_{T}^{(1)} = -1$.}

\end{proposition}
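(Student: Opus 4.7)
My plan is to turn the abstract quantity $\soc_T^{(1)} = \LLR(d_{1:T}^{(1)})$ into a log-ratio of probabilities over the set $\mc{C}_+(T)$ and then obtain pointwise bounds on each chain's log-likelihood from the threshold crossing condition.

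\textbf{Step 1: Re-express $\soc_T$ as a sum over $\mc{C}_+(T)$.} The observed decision trajectory $d_{1:T}^{(1)}$ with $d_{1:T-1}^{(1)} = 0$ and $d_T^{(1)} = 1$ corresponds exactly to the event that the private observation chain $\xi_{1:T}^{(1)}$ lies in $\mc{C}_+(T)$. Using conditional independence of private measurements across time, I can therefore write
\begin{equation*}
\soc_T^{(1)} = \log \frac{P(\xi_{1:T}^{(1)} \in \mc{C}_+(T) \mid \hp)}{P(\xi_{1:T}^{(1)} \in \mc{C}_+(T) \mid \hm)} = \log \frac{\displaystyle\sum_{\xi_{1:T} \in \mc{C}_+(T)} \prod_{i=1}^T f_+(\xi_i)}{\displaystyle\sum_{\xi_{1:T} \in \mc{C}_+(T)} \prod_{i=1}^T f_-(\xi_i)}.
\end{equation*}

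\textbf{Step 2: Apply the threshold conditions pointwise to each chain.} By definition of $\mc{C}_+(T)$, every chain $\xi_{1:T} \in \mc{C}_+(T)$ satisfies $\sum_{i=1}^{T-1}\LLR(\xi_i) \in (\thm,\thp)$ and $\sum_{i=1}^T \LLR(\xi_i) \geq \thp$. The first inequality combined with the identity $\sum_{i=1}^T \LLR(\xi_i) = \sum_{i=1}^{T-1} \LLR(\xi_i) + \LLR(\xi_T)$ gives the two-sided chain-wise bound
\begin{equation*}
\thp \;\leq\; \sum_{i=1}^T \LLR(\xi_i) \;<\; \thp + \LLR(\xi_T).
\end{equation*}
Exponentiating,
\begin{equation*}
e^{\thp} \prod_{i=1}^T f_-(\xi_i) \;\leq\; \prod_{i=1}^T f_+(\xi_i) \;<\; e^{\thp + \LLR(\xi_T)} \prod_{i=1}^T f_-(\xi_i).
\end{equation*}

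\textbf{Step 3: Sum over $\mc{C}_+(T)$ and take logs.} Summing the left inequality over the (finite) set $\mc{C}_+(T)$ and dividing by $\sum \prod f_-(\xi_i)$ yields $\soc_T^{(1)} \geq \thp$. For the right inequality, replacing $\LLR(\xi_T)$ by its supremum over $\mc{C}_+(T)$ gives the uniform bound $e^{\thp + \LLR(\xi_T)} \leq e^{\thp + \epsilon_T^+}$ with $\epsilon_T^+ := \sup_{\xi_{1:T}\in\mc{C}_+(T)} \LLR(\xi_T)$; summing, dividing, and taking logs produces $\soc_T^{(1)} < \thp + \epsilon_T^+$. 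Positivity $\epsilon_T^+ > 0$ follows because any chain reaching $\thp$ at exactly time $T$ must satisfy $\LLR(\xi_T) > 0$ (the belief strictly increased on the final step, since $\sum_{i=1}^{T-1}\LLR(\xi_i) < \thp \leq \sum_{i=1}^T \LLR(\xi_i)$). The bound for $d_T^{(1)}=-1$ is completely symmetric.

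The argument is mostly bookkeeping; the only conceptual move is recognizing that observing the decision pattern $d_{1:T}^{(1)}$ is equivalent to observing membership in $\mc{C}_+(T)$, which then allows the pointwise threshold bound to be lifted to a bound on the aggregate log-likelihood ratio. The main subtlety, and the step I would be most careful about, is that the bound on the excess comes from using $\sup \LLR(\xi_T)$ \emph{uniformly} in the numerator while keeping the denominator as is: this is only valid because the sup is achieved over the same index set that we sum over, and replacing a quantity by its supremum preserves the inequality term-by-term.
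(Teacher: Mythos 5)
Your proof is correct and follows essentially the same route as the paper's: both re-express the decision event $d_{T-1}^{(1)}=0,\ d_T^{(1)}=1$ as membership of $\xi_{1:T}^{(1)}$ in ${\mc C}_+(T)$, obtain the pointwise bounds $e^{\thp}\prod f_- \leq \prod f_+ < e^{\thp+\LLR(\xi_T)}\prod f_-$ from the threshold-crossing conditions, and sum over the chain set to bound $\soc_T$. Your write-up is in fact slightly more explicit than the paper's on the upper bound, where you carry the uniform bound $\LLR(\xi_T)\leq \ep_T^+$ through the summation rather than asserting it directly.
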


\begin{proof}
\label{A:prop41}
Assume $d_{T} = 1$, and note that $d_{{T}-1}^{(1)} = 0$ and $d_{T}^{(1)} = 1$ imply $\sum_{t=1}^{T} \LLR (\xi_t^{(1)}) \geq \theta_+$, so
\begin{align}
\label{eq:chainprob}
 e^{\thp} \prod_{t = 1}^{{T}} f_-(\xi_t^{(1)} ) \leq \prod_{t = 1}^{{T}} f_+(\xi_t^{(1)} ).
\end{align}
We define the set of all chains of observations, $\xi_{1:T}$, that lead to a decision precisely at $T$: ${\mc C}_+(T) = \{ \xi_{1:T}^{(1)}~|~\obb 1{1:t} \in(\thm, \thp),~1 \leq t < T;~\obb 1{1:T} \geq \thp   \}$. Thus, $P( d_{{T}-1}^{(1)} = 0, d_{T}^{(1)} = 1 | \hp) = \sum_{{\mc C}_+(T)} f_+( \xi_{1:{T}}^{(1)} )$, where
$f_+(\xi_{1:T}^{(1)}) : = \prod_{t = 1}^{T} f_+(\xi_t^{(1)})$. Together with inequality~\eqref{eq:chainprob} this yields
\begin{align*}
P( d_{{T}-1}^{(1)} = 0, d_{T}^{(1)} = 1 | \hp ) \geq  e^{\thp} \sum_{{\mc C}_+(T)}  f_-( \xi_{1:{T}}^{(1)} ) = e^{\thp} P( d_{{T}-1}^{(1)} = 0, d_{T}^{(1)} = 1 | \hm);
 \end{align*}
hence, $\LLR ( d_{{T}-1}^{(1)} = 0, d_{T}^{(1)} = 1 ) \geq \thp$. 
To obtain the upper bound on the social information inferred by agent~2 from the decision of agent 1 ($ \soc_{T_1}^{(1)}$), we must bound the possible private information available to agent 1 ($ \obb 1 {1:T}$). Note for any $\xi_{1:T}^{(1)} \in {\mc C}_{+}(T)$,
 \[
 \obb 1 {1:T} = \obb 1 {1:T-1} + \LLR (\xi^{(1)}_{T}) < \theta_+ + \LLR (\xi^{(1)}_{T}).
 \]
which implies
 \[
 \ep_{T}^+ \leq \sup_{\xi_{1:T} \in {\mc C}_+(T)}\LLR (\xi_{T}),
\]
by positivity of $\LLR (\xi_{T})$, since $\xi_{T}$ triggers a decision, and $\LLR (\xi_T) \leq \sup_{\xi_{1:T} \in {\mc C}_+(T)}\LLR (\xi_{T})$.
A similar argument can be used to derive bounds on $\soc _{T}^{(1)}$ for $d_{T}^{(1)} = -1$.
\end{proof}

Note that the evidence accumulation process will be analogous if the agents' thresholds differ.  In particular, if agent 1's thresholds are $\theta_{\pm}^1$ and agent 2's are $\theta_{\pm}^{2}$, then if agent 2 observes a decision by agent 1, it will update its LLR by approximately $\theta_{\pm}^{1}$, depending on the decision of agent 1.

In the following we will frequently assume that the two distributions $f_+(\xi)$ and $f_-(\xi)$ are close, and hence $\LLR(\xi)$ is small for any $\xi$.  Proposition~\ref{prop:bump} then implies that $\soc_{T} \approx \theta_{+}$. This is a common approximation in the literature on sequential probability ratio tests~\cite{wald1945}.

{\bf Social information prior to a decision.}  To understand why the absence of a decision can provide information, consider the case when $\theta_+ \neq -\theta_-$.  As one of the thresholds is closer to naught, in general each agent is more likely to choose one option over another by some time $t$. The absence of the more likely choice therefore reveals to an observer that the agent has gathered evidence favoring the alternative.   We show this explicitly in the following.

\begin{definition}
\label{def:symm}
The  measurement distributions $P(\xi | \hp ) = f_+ (\xi)$ and $P(\xi | \hm ) = f_- (\xi)$ are \defi{symmetric} if there exists an involution $\Phi:  \Xi \to \Xi$, i.e. $\Phi = \Phi^{-1}$, with $\Phi(\xi) \neq Id(\xi)$ such that  $f_+ (\xi) = f_- (\Phi(\xi))$ for every $\xi \in \Xi$.  When $\thp = - \thm$ we say that the \defi{thresholds are symmetric}. 
\end{definition}

It is frequently assumed, and experiments are frequently designed, so that threshold and measurement distributions are symmetric~\cite{shadlen1996}.  In much of decision-making literature, for example, it is assumed that $\Xi = -\Xi$ and $\Phi(\xi) = -\xi$.  {In the following we let $\Phi(\xi) = -\xi$ when discussing distribution symmetry.} However, there are a number of interesting consequences when introducing asymmetries into the reward or measurement distributions~\cite{Balci10}, which suggest subjects adapt their priors or decision thresholds~\cite{hanks2011}. In examples we will assume that agents use asymmetric decisions thresholds ($\theta_+ \neq - \theta_-$) due to a known asymmetry in the 2AFC task.  If the two thresholds differ then an agent more readily adopts one of the two options.  When the measurement distributions are asymmetric then the agent can more easily obtain information about one of the two options. 
  In such a situation, non-decision on the part of an agent will provide information to those agents observing it.


We call the social information agent 2 gathers before observing a choice by agent 1, the \defi{non-decision evidence}. This social information is determined by the decomposition in Eq.~\eqref{E:decomposition}, and Eq.~\eqref{E:soc}. The \defi{survival probability} of the 
stochastically evolving belief is given by
\[
S_\pm (t) = P( d_t^{(1)} = 0 | H^\pm ) = P( y^{(1)}_{s,0} \in (\thm,\thp), 0 \leq s \leq t | H^\pm ).
\]
Then the social information provided by the absence of the decision by time $t$ is,
\begin{equation}
\label{social}
\dec_t^{(1)} = \log \frac{S_+(t)}{S_-(t)} = \LLR (T > t).
\end{equation}
Note that if, for example, $S_-(t) \leq S_+(t)$ for all $t \geq 0$, then $H^-$ decisions more often occur sooner than $H^+$ decisions, and $\log [S_+(t)/S_-(t)] \geq 0$. Thus, observing that an agent has not made a choice by time $t$ provides evidence in favor of the choice that requires more time to make.



{When the threshold and measurement distributions are symmetric, then every sequence of observations
$\xi_{1:t}^{(1)}$ favoring one decision has a matching sequence,  $\Phi(\xi_{1}^{(1)}), \ldots \Phi(\xi_{t}^{(1)})$, 
providing equal evidence for the opposite 
decision since 
\[
\sum_{l = 1}^t \LLR ( \xi_{l}^{(1)} ) =
- \sum_{l = 1}^t \LLR ( \Phi( \xi_{l}^{(1)} )) 
\]
By symmetry, both sequences of observations are equally likely, implying $S_+(t) = S_-(t)$.
We have therefore shown that:}

\begin{proposition} \label{prop:symmetric}
If the measurement distributions and thresholds are symmetric then $\dec_t^{(1)} = 0$ when $d_t^{(1)} = 0$. 
\end{proposition}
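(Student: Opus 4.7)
The plan is to prove $S_+(t) = S_-(t)$ by constructing a measure-preserving bijection on the space of length-$t$ observation sequences. The conclusion $\dec_t^{(1)} = \log[S_+(t)/S_-(t)] = 0$ then follows immediately from Eq.~\eqref{social}.

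First, I would extend the involution $\Phi$ coordinatewise to a map $\tilde\Phi: \Xi^t \to \Xi^t$ by $\tilde\Phi(\xi_{1:t}) = (\Phi(\xi_1), \ldots, \Phi(\xi_t))$. Since $\Phi \circ \Phi = \mathrm{Id}$ on $\Xi$, the map $\tilde\Phi$ is an involution on $\Xi^t$ and, in particular, a bijection.

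Second, I would verify that $\tilde\Phi$ preserves the survival event $\mc{S}(t) := \{\xi_{1:t} : \sum_{l=1}^s \LLR(\xi_l) \in (\thm,\thp)\text{ for all }s \leq t\}$. The excerpt already observes that $\sum_{l=1}^s \LLR(\xi_l) = -\sum_{l=1}^s \LLR(\Phi(\xi_l))$ for every $s \leq t$, so the belief trajectory associated with $\tilde\Phi(\xi_{1:t})$ is the pointwise negative of that associated with $\xi_{1:t}$. Symmetric thresholds, $\thm = -\thp$, make the interval $(\thm,\thp)$ invariant under negation, so at every intermediate step $s$ the belief lies in $(\thm,\thp)$ if and only if its negative does. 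Hence $\tilde\Phi$ restricts to a bijection of $\mc{S}(t)$ onto itself.

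Third, I would use the measurement symmetry to identify the two relevant probability weights. For any $\xi_{1:t} \in \Xi^t$, conditional independence and the relation $f_-(\Phi(\xi)) = f_+(\xi)$ give
\[
P(\tilde\Phi(\xi_{1:t}) \mid \hm) = \prod_{l=1}^t f_-(\Phi(\xi_l)) = \prod_{l=1}^t f_+(\xi_l) = P(\xi_{1:t} \mid \hp).
\]
Summing this identity over $\xi_{1:t} \in \mc{S}(t)$ and reindexing the right-hand sum via the bijection $\tilde\Phi$ on $\mc{S}(t)$ yields $S_+(t) = S_-(t)$, whence $\dec_t^{(1)} = 0$ whenever $d_t^{(1)} = 0$.

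There is no real obstacle: the statement is essentially a symmetry argument, and the work is just confirming that the two ingredients — measurement symmetry through $\Phi$ and threshold symmetry $\thm = -\thp$ — cooperate to make $\tilde\Phi$ a measure-preserving involution of $\mc{S}(t)$. The only mild subtlety is ensuring the survival condition is checked at every intermediate time $s \leq t$, which is handled by the pointwise negation of the belief trajectory.
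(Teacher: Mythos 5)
Your proof is correct and follows essentially the same route as the paper: the paper's argument also pairs each observation sequence with its coordinatewise image under $\Phi$, notes that the accumulated LLR negates, and concludes $S_+(t) = S_-(t)$ by symmetry. Your write-up is simply a more careful rendering of that argument, making explicit both the invariance of the survival event at every intermediate step and the measure-preservation identity $P(\tilde\Phi(\xi_{1:t}) \mid \hm) = P(\xi_{1:t} \mid \hp)$, which the paper compresses into ``both sequences of observations are equally likely.''
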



{We show below that when measurement distributions and thresholds are symmetric, non-decisions are uninformative in any network {until the first decision is made}. When thresholds and measurement distributions are not symmetric the absence of a decision of agent 1 can provide evidence for one of the choices.  In particular, \emph{social information from a non-decision is deterministic}, which we illustrate in the following example. }


\subsubsection*{Example: Belief as a Biased Random Walk}
\label{sec:4state}
In the following example we chose the measurement distributions,  $f_{\pm}( \xi ),$ so that the belief increments due to private measurements satisfy $\LLR ( \xi ) \in \{ +1, 0, -1 \}$ with respective probabilities $\{p,s,q \}$ ($\{q,s,p \}$) when $H = H^+$ ($H = H^-$).  {In particular, we take $p/q = e$.}  This ensures that a decision based solely on private evidence results from a belief (LLR) exactly at integer thresholds\footnote{In the remainder of this work we use the same measurement distributions in all examples.  We always choose $p,q,s$ so that decisions based solely on private evidence result in a LLR exactly at threshold.}. Here we take $\theta_+ = 2$ and $\theta_- = -1$. Beliefs then evolve as  biased random walks on the integer lattice, as demonstrated below.

 Agent 1 makes only private observations, while agent 2 makes private observations and observes agent 1.  Their beliefs are described by Eq.~\eqref{E:one_agent} and Eq.~\eqref{E:decomposition}, respectively.  Realizations of these processes are shown in Fig.~\ref{fig:fourStateExample}. The social evidence obtained by agent 2 is independent of the particular realization until agent 1 makes a decision, whereas private information is realization-dependent. When thresholds are small, an expression for social evidence can be obtained explicitly.

\begin{figure}[t!]
\centering
\includegraphics[width = 12cm]{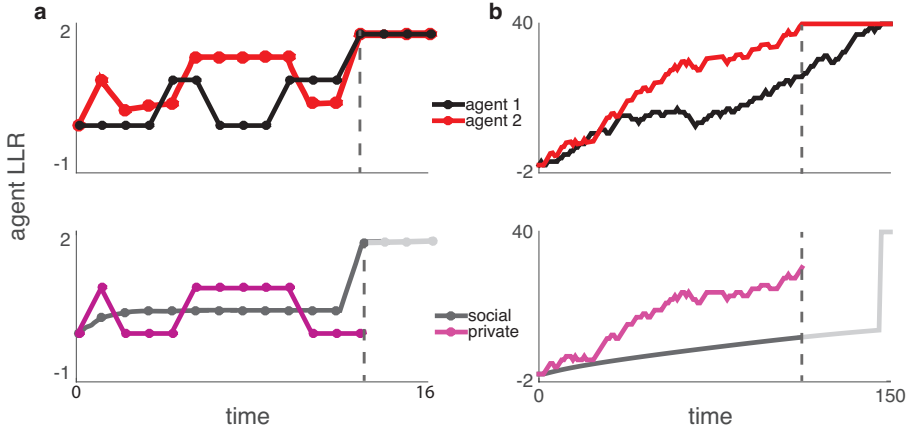}
\caption{An example of the belief evolution in a two-agent, unidirectional network shown in Fig.~\ref{fig:uni}.  (a) The belief of agent 1 is a random walk on the integers. For boundaries at $\thm = -1$ and $\thp = 2$, initially two observations favoring $\hp$ are required for decision $d_2^{(1)} = 1$.   The belief of agent 2 is the sum of stochastically evolving private evidence, and deterministic social evidence until observing a choice. The belief of agent 2 is incremented by $\theta_{\pm}$ when the agent observes a decision. The social information equals the thresholds in this case as by design agent 1 makes a decision when its belief equals $\theta_{\pm}$.
(b) The same processes with $\thm = -2$ and $\thp = 40$.  {Here, $H = H^+$.}
Code to produce all the figures in the manuscript is available at \url{https://github.com/Bargo727/NetworkDecisions.git}. }
\label{fig:fourStateExample}
\end{figure}

Standard first-step analysis~\cite{taylor2014introduction} yields non-decision social evidence in our case $\theta_+ = 2$ and $\theta_- = -1$. Here we have four belief states, $y_t^{(1)} \in \{ -1,0,1,2\}$, where $y_t^{(1)} = +2 =  \theta_+$ and $y_t^{(1)} = -1 = \theta_-$ are absorbing boundaries. Let $P_+^l(t)  :=  P(y^{(1)}_t = l | \hp)$. Then, for $\mathbf{P}_+(t) : = (P_+^{-1}(t), P_+^0(t), P_+^1(t), P_+^2(t))^T$, the probabilities are updated according to
\begin{align*}
\mathbf{P}_+ (t+1) = M(p,q) \mathbf{P}_+ (t), \hspace{5mm} M(p,q) := \left( \begin{array}{cccc} 1 & q & 0 & 0 \\ 0 & s & q & 0 \\ 0 & p & s & 0 \\ 0 & 0 & p & 1 \end{array} \right)
\end{align*}
with initial condition $P_+^0(0) = 1$ and $s = 1-p-q$. The rates $p$ and $q$ switch places in the $P_-^l : = P(y_t^{(1)} = l | H^-)$ case.

We solve explicitly for the evolution of the probability of the unabsorbed states $\vb_+(t) = (P_+^0(t), P_+^1(t))^T$ using the eigenpairs of the submatrix:
\[
\vb_+ (t) = \frac{\lambda_+^t}{2}\bv{1}{\sqrt{p/q}} -\frac{\lambda_-^t}{2}\bv{-1}{\sqrt{p/q}},
\]
where $\lambda_{\pm} := s \pm \sqrt{pq}$, and for $\vb_-(t) = (P_-^0(t), P_-^1(t))^T$ we have
\[
\vb_- (t) = \frac{\lambda_+^t}{2}\bv{1}{\sqrt{q/p}} -\frac{\lambda_-^t}{2}\bv{-1}{\sqrt{q/p}}.
\]
Thus for $d_t^{(1)}=0$,
\begin{align*}
\dec_t^{(2)} &
= \log \frac{ (\lambda_+^t + \lambda^t_-) + \sqrt{p/q} ( \lambda_+^t - \lambda_-^t)}
{(\lambda_+^t + \lambda^t_-) + \sqrt{q/p} ( \lambda_+^t - \lambda_-^t)} = \log \frac{(1-\sqrt{p/q})+(1+\sqrt{p/q})(\lambda_+/\lambda_-)^t}{(1-\sqrt{q/p})+(1+\sqrt{q/p})(\lambda_+/\lambda_-)^t} .
\end{align*}
and for $s>0$,
\[
\lim_{t \to \infty} \dec_t^{(2)} = \log \frac{1+\sqrt{p/q}}{1+\sqrt{q/p}} = \log \sqrt{p/q} = \frac{1}{2} \log \frac{p}{q},
\]
since $|\lambda_+|>|\lambda_-|$.
For our choice of parameters, $\sqrt {p/q} = \sqrt{e}$ so $\lim_{t \to \infty} \dec_t^{(2)} = \frac{1}{2}$.
Note, $p/q$ measures the noisiness of the independent observations of the environment and is a proxy for the signal-to-noise ratio (SNR).  If $p \gg q$, then $\lim_{t \to \infty} \dec_t^{(2)}$ is large.  On the other hand, if observations are noisy, \emph{i.e.} $p \approx q$, then $\lim_{t \to \infty} \dec_t^{(2)} \approx 0$.

 {When all  observations are informative ($s = 0$)}, the social information will alternate indefinitely between two values. In this case, {$\lambda_+/\lambda_- = -1$}, and we have
\[
\dec_t^{(2)} = \log \frac{(1-\sqrt{p/q})+(1+\sqrt{p/q})(-1)^t}{(1-\sqrt{q/p})+(1+\sqrt{q/p})(-1)^t} = \left\{ \begin{array}{cc} 0, & t \ \text{even}, \\
\log (p/q), & t \ \text{odd},
\end{array} \right.
\]
since the belief of  agent 1  must be at lattice site 0 (1) after an even (odd) number of observations.


\begin{figure}[h!]
\centering
\includegraphics[width = 12cm]{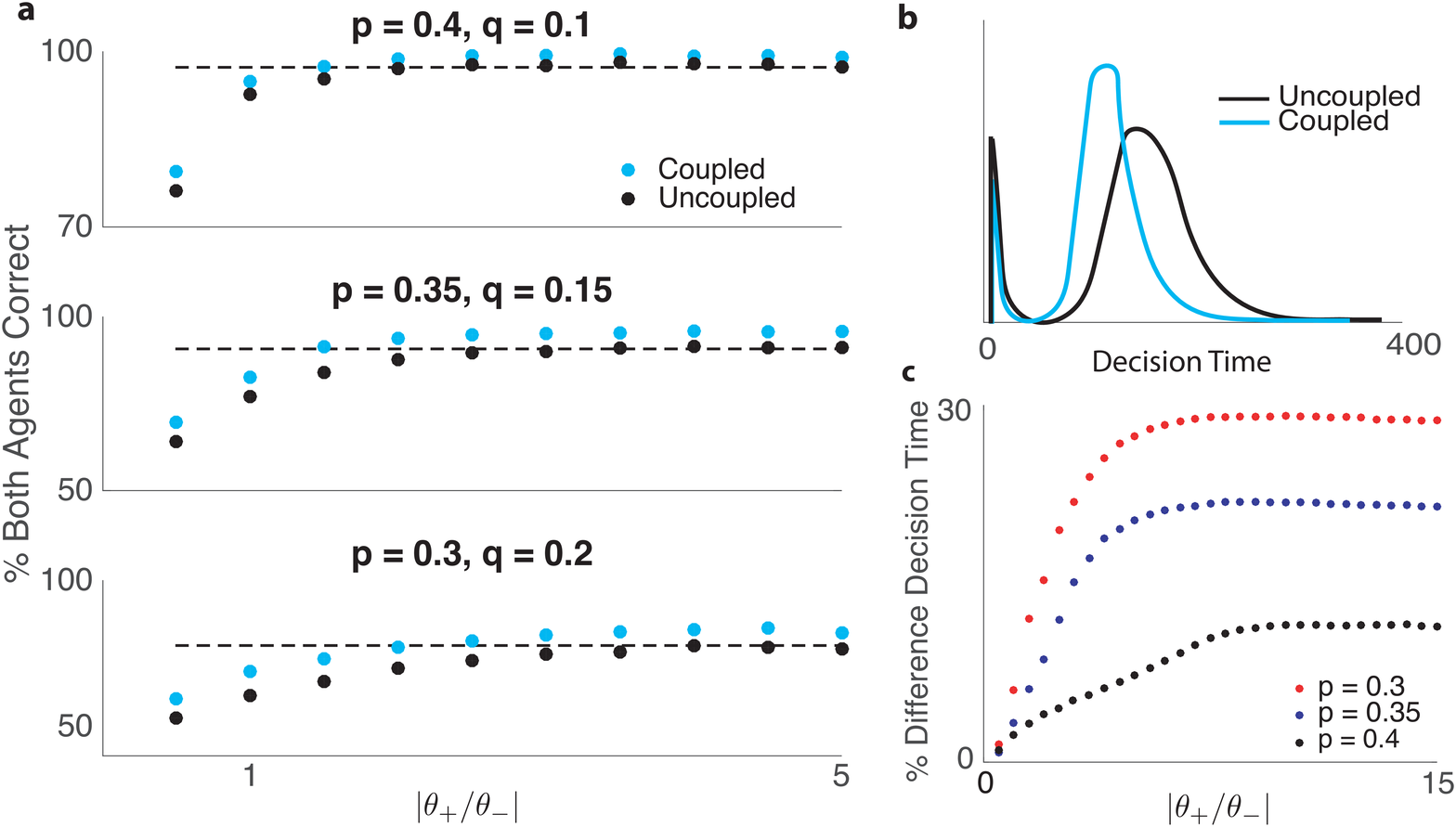}
\caption{Statistics of decisions in a two-agent network with (coupled), and without (uncoupled) communication from agent 1 to 2.  The lower threshold is fixed at $\theta_- = -2 $ throughout. (a) The fraction of times both agents selected the correct choice as a function of asymmetry in the system as measured by $\theta_+ / \theta_-$.    The dashed lines are asymptotic probabilities of the correct choice in the limit $\theta_{+} \to \infty$ for uncoupled agents. (b) First passage time distributions for the LLR in the case $\theta_+ = 40$.  (c) Relative percent difference in decision times for agent 2 in coupled versus uncoupled trajectories as a function of asymmetry in the system shown for different values of $p$.  }
\label{sumstat}
\end{figure}

These formulae reveal that social evidence is typically not accrued significantly beyond the first few observations when decision thresholds are small. This is because social information is bounded above by $\frac{1}{2} \log \frac{p}{q}$, which will only be large in cases that $p/q$ is very large and decisions will almost always be rapid (occurring after two or three timesteps). We show this for $\thm = - 1$ and $\thp = 2$ in Fig.~\ref{fig:fourStateExample}a, with $\soc_t^{(1)} \to 1/2$ as $t \to \infty$ for our choice of parameter values.

In general, social information, $\soc_t^{(1)},$ will converge to a value on the side of the threshold with larger absolute value in the absence of a decision. Intuitively,  when $|\thp| > |\thm|$ then more observations and a longer time are required for an agent's belief to reach $\thp$, on average.  In this case, observing that agent 1 has not chosen $\hm$ after a small initial period suggests that this agent has evidence favoring $\hp$.

To illustrate the impact of asymmetry in the measurement distributions, we varied the probability, $p,$ of an observation favoring $H^+$, while keeping the increments in the belief fixed.  When agent 2 observes the decisions of agent 1, the probability that both reach the correct decision is larger than when they both gather information independently (See Fig.~\ref{sumstat}a).  In particular, as $p/q$ decreases so that private observations provide less information,  the social information has an increasing impact on accuracy. 

Social information also affects decision times, particularly in the case of strongly asymmetric thresholds (Fig.~\ref{sumstat}b). An early peak in the distributions represents decisions corresponding to the smaller threshold, $\thm$, while the latter peak corresponds to the opposite decision when the belief crosses $\theta_+ \gg - \thm$. 
As $p/q$ increases, the difference in decision times between the agents decreases (See Fig.~\ref{sumstat}c), as social information has a larger impact on decisions when private measurements are unreliable. 

\noindent
{\bf Remark:} The impact of social information in this example is small, unless the difference in thresholds is very large.
However, this impact can be magnified in larger networks: Consider a star network in which an agent observes the decision of $N > 1$ other agents.  If these are the only social interactions, the independence of private measurements implies that social information obtained by the central agent is additive. Until a decision is observed, this social information  equals $N \soc_t^{(1)},$ where $\soc_t^{(1)}$ is as defined in Eq.~\eqref{social}.  Thus the impact of non-decision information can be magnified in larger networks. However, as these cases are computationally more difficult to deal with, we do not discuss them in detail here. 

\section{Two-agent recurrent network}
\label{sec:bidsymm}
We next consider two agents that can observe and react to each other's choices. We again assume that at each time the agents synchronously make independent, private observations, $\xi_t^{(i)} \in \Xi,$ and update their beliefs. The agents then observe  each other's decision state, $d_t^{(j)}$ ($j \neq i$) and use this social information to update their belief again. Knowing that a belief has been updated and observing the resulting decision state can provide new information about an agent's belief.  {Importantly, we assume agents add the information obtained from private observations first and then check whether or not they have sufficient information to make a decision. If not, agents gather social evidence from other agents in the network.}

{
Unlike previous cases we considered, social information exchange can occur over several steps.
We assume that agents exchange social information until they cannot learn anything new from each other 
and then make their next private observation (See Fig.~\ref{recurrent}). Alternatively, we could assume
that each social information exchange is followed by a private observation, but the analysis would proceed similarly.  
}

\begin{figure}[t!]
\centering
\includegraphics[width = 12cm]{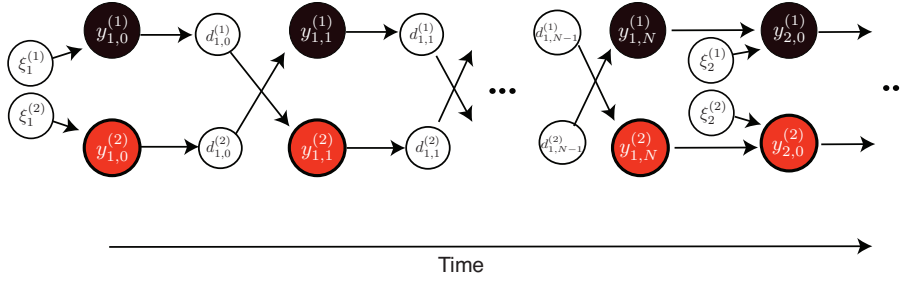}
\caption{In a recurrent network of two agents the LLRs, $y_{t,n}^{(1)}$ and $y_{t,n}^{(2)}$, of the two observers are updated
recursively. Agents update their belief after private observations, $\xi^{(i)}_t$, as well as observations of the subsequent series of  decision 
states, $d_{t ,n}^{(j)}$, of their neighbor ($j \neq i$).  After a finite number of steps, $N_t$, no further information can be obtained by observing each other's decision, and the two agents make their next private observation, $\xi^{(i)}_{t+1}$, synchronously. The process continues until one of the agents makes a decision.}
\label{recurrent}
\vspace{-3mm}
\end{figure}

We describe several properties of this evidence exchange process: As in the case of a unidirectional network, social information is additive. It evolves deterministically up to the time of a decision of the observed agent.  Once a decision is made, the social information that is communicated approximately equals the belief threshold ($\thp$ or $\thm$) crossed by the LLR of the deciding agent.   We also show that the exchange of social information after an observation terminates in a finite number of steps either when indecisions provide no further social information or when one of the agents makes a choice. 

Such information exchange has been discussed in the literature on \emph{common knowledge} and rational learning in social networks~\cite{milgrom1982}. This body of work shows that rational agents that repeatedly announce their preferred choice must eventually reach agreement~\cite{aumann1976,Geanakoplos1982,geanakoplos1994,mueller2013general,fagin2004}. Typically, it is assumed that information is shared by announcing a preference that can be changed as more information is received. Our assumption  that decisions are immutable means that agreement is not guaranteed.

We describe the process of social exchange inductively, describing the basic case following an observation at $t = 1$ in some detail. Following exchanges are similar, as the belief is updated recursively.

{\bf Social information exchange after the first observation.} After the first private observation, $\xi_1^{(i)} \in \Xi,$
at $t = 1$, the beliefs of the agents are $y_{1, 0}^{(1)}, y_{1, 0}^{(2)},$ where  $y_{1, 0}^{(i)} = \obb i1$.  Let 
\begin{equation} \label{E:Pi}
\Pi = \left\{ \LLR (  \xi ) \; | \;  \xi \in  \Xi \right\}
\end{equation}
be the set of all possible increments due to a private observation,  so that $\obb ij \in \Pi, i = 1,2.$  As the set of possible observations,  $\Xi,$ is finite,  so is $\Pi$.  We will describe how the two agents exchange social information with their neighbor until one observes a decision or no further information can be exchanged. 
The second subscript, $n$ in $y_{1, n}^{(j)},$ denotes the steps in this subprocess of social information exchange preceding a decision or subsequent private measurement.

We again associate a belief, $y_{1, n}^{(i)},$ with a corresponding decision state,
$d_{1 ,n}^{(i)}$, as in Eq.~\eqref{E:decomposition}.
If neither of the first two private observations leads to a decision, then $d_{1 ,0}^{(i)} =  0,$ and $y_{1, 0}^{(i)} \in \Theta $ for $i = 1,2$, where $\Theta \equiv (\thm, \thp)$ . Importantly, the fact that agent $i$ observed that its counterpart did not decide means that they know $y_{1, 0}^{(j)} \in \Theta,$ for $i \neq j.$  

To update their belief, agents compare the probability of all available evidence under the two hypotheses,  $P(  \xi^{(i)}_1, d_{1,0}^{(j)}  | \hp)$, and $P(  \xi^{(i)}_1, d_{1,0}^{(j)}  | \hm).$  As $d_{1,0}^{(j)}$ is independent of $\xi_1^{(i)}$ for $i \neq j$, their updated beliefs are
\begin{align*}
y_{1,1}^{(i)} &= \LLR (  \xi^{(i)}_1, d_{1,0}^{(j)}) = 
\LLR (  \xi^{(i)}_1 ) +\LLR (d_{1,0}^{(j)} = 0) \\
&=  y_{1, 0}^{(i)}  + \LLR (d_{1,0}^{(j)} = 0)
=   \obb i1 + \underbrace{\LLR (y_{1,0}^{(j)}) }_{\soc_{1,1}} =  \obb i1 +  \soc_{1,1}.
\end{align*}
We omit the superscripts on the social information, since  $\soc_{1,1}^{(1)} = \soc_{1,1}^{(2)}$ as the agents are identical. Since the agents know the measurement distributions, $f_{\pm}(\xi)$, the survival probabilities,  $P(y_{1,0}^{(i)} \in \Theta | \hp),$ can be computed as in Section~\ref{sec:unidir}.  

If $y_{1, 1}^{(1)}, y_{1, 1}^{(2)} \in \Theta,$ no decision is made after the first exchange of social information, and $d_{1,1}^{(i)} = 0$ for $i = 1,2.$ In this case, agent $i$ knows that $ y_{1,1}^{(j)} \in \Theta,$ for $j \neq i$, so $\thm  < y_{1,0}^{(j)} + \soc_{1,1}  < \thp$.  {As the initial private observation of agent $j$ did not lead to a decision, agent $i$ also knows $y_{1,0}^{(j)} \in \Theta$.} Thus observing $d_{1,1}^{(j)} =0$ informs agent $i$  that $y_{1,0}^{(j)} \in \Theta 
\cap (\Theta - \soc_{1,1}) \equiv \Theta_{1,1},$ for $i \neq j$.  More precisely, 
\begin{equation} \label{eq:step1}
P( d_{1,1}^{(j)} = 0 |  d_{1,0}^{(i)} = 0 , \hp) = P( y_{1,0}^{(j)} \in \Theta_{1,1}| \hp ).
\end{equation}

Any initial measurement $ \xi^{(j)}_1$ that led to a belief $y_{1,0}^{(j)} \notin \Theta_{1,1}$  would have {also led} to a decision at this point.  This would end further evidence accumulation.  Thus the other agent either observes a decision or knows that  $y_{1,0}^{(j)} \in \Theta_{1,1}$.  

We will deal with the consequence of observing a decision subsequently.
If an observation does not lead to a decision after the first exchange of social information, then  $y_{1,0}^{(j)} = \obb i1 \in \Theta_{1,1}$
implies that
\begin{align*}
y_{1,2}^{(i)} &= 
\LLR (  \xi^{(i)}_1, d_{1,0}^{(i)} = 0, d_{1,1}^{(i)} = 0, d_{1,0}^{(j)} = 0, d_{1,1}^{(j)} = 0 ) =  \LLR(\xi^{(i)}_1) + \LLR (d_{1,1}^{(j)} = 0 |  d_{1,0}^{(i)} = 0) \\
&= \obb i1 + \LLR ( y_{1,0}^{(j)} \in \Theta_{1,1}) := \obb i1 + \soc_{1,2}.
\end{align*}
Again, if $y_{1,2}^{(i)} \in \Theta$, neither agent makes a decision, and both will observe $d_{1,2}^{(i)} = 0$.

To extend this argument we define $ \Theta_{1,l-1} \equiv \bigcap_{n = 0}^{l-1} (\Theta - \soc_{1,n})$ 
so that $\obb i1 \in \Theta_{1,l-1}$ implies that agent $i$ has not made a decision at step $l-1$ of the
social exchange, $d_{1,l-1}^{(i)} = 0$.  In this case we have,
\begin{align*}
&P(  \xi^{(i)}_1, d_{1,0}^{(i)} = 0, \ldots d_{1,l}^{(i)} = 0, d_{1,0}^{(j)} = 0,\ldots d_{1,l}^{(j)} = 0   | \hp  ) \\
&= P(  \xi^{(i)}_1,  d_{1,l-1}^{(i)} = 0,  d_{1,l}^{(j)} = 0   | \hpm  ) =  P(  \xi^{(i)}_1  | \hp )  P( d_{1,l}^{(j)} = 0 |  d_{1,l-1}^{(i)} = 0 , \hp).
\end{align*} 
Thus if the $l$-th exchange of social information results in no decision, each agent 
updates its belief  as 
\begin{equation} \label{E:social}
 y_{1,l}^{(i)}  =  \obb i1+  \LLR ( y_{1,0}^{(j)} \in \Theta_{1,l-1} ) \equiv \obb i1 + \soc_{1,l},
\end{equation}
where $\soc_{1,0} = 0$.

This exchange of social information continues until one of the agents makes a choice or
the absence of a decision does not lead to new social information~\cite{Geanakoplos1982}. The second case occurs at a step, $N_1$, at which indecision provides no further information about the belief of either agent, so that
\begin{equation} \label{E:terminate}
\Pi \cap \Theta_{1,N_1} = \Pi \cap \Theta_{1,N_1+1},
\end{equation}
where $\Pi$ is  defined in Eq.~\eqref{E:Pi}.
In this case, $\soc_{1,N_1} = \soc_{1,N_1+1},$ and, if {neither agent decides}, their beliefs would not change 
at step $N_1+1$.  As both agents know that nothing new is to be learned from observing their neighbor, they then make the next private observation.

We denote the total social information gained after this exchange is complete by $\soc_{1} = \soc_{1,N_1},$ and the resulting belief by $y_{1}^{(i)} =  y_{1,N_1}^{(i)}= \obb i1 + \soc_{1}$. If no decision is made at this point, then agent $j$ 
knows that $y_{1,0}^{(i)} = \obb i1 \in \left(\bigcap_{n = 0}^{N_1} (\Theta - \soc_{1,n})\right) \equiv \Theta_1$.

The process can be explained simply: The absence of a decision provides sequentially tighter bounds on the neighbor's belief. When the agents can conclude that these bounds do not change from one step to the next, the absence of a decision provides no new information, the exchange of social information ends, and both agents make the next private observation.

Importantly, this process is \emph{deterministic}:  until a decision is made, the social information gathered on each step is the same across realizations,
\emph{i.e.} independent of the private observations of the agent.


{\bf Social information exchange after an observation at time $t>1$.} The integration of private information from each individual measurement, $\xi_t^{(i)}$, is again followed by an equilibration process. The two agents observe each others' decision states until nothing further can be learned.
To describe this process, we proceed inductively, and extend the definitions introduced in the previous section.

 Let $\obb it  = \LLR (  \xi^{(i)}_t )$ be the private information obtained from an observation at time $t$.  {For the inductive step we assume that each equilibration process after the observation at time $t$ ends either in a decision or allows each agent $i$ to conclude that the accumulated social and
private evidence  of agent $j$ were insufficient to cross a decision threshold.  We will see that this means that ${\rm Priv}_{1:t}^{(j)} \in \Theta_{1:t}$ where  $\Theta_1$ was defined above, and we define the other sets in the sequence recursively.}

Following equilibration after observation $\xi^{(i)}_1,$ either one of the agents makes a decision, or  
each agent $i$ knows that  the private information of its counterpart satisfies $\obb j1 \in \Theta_1$ for $j \neq i$.
Let
$$
\Theta_{1:t} = \left\{ {\rm Priv}_{1:t}^{(j)} \Big | \sum_{k = 1}^l  {\rm Priv}_{k}^{(j)} \in \Theta_l \text{ for all }  1 \leq l \leq t \right\}.
$$
Thus, agent $i \neq j$ knows that any sequence ${\rm Priv}_{1:t}^{(j)}$ that did  not lead to a decision by agent $j$ must lie in $\Theta_{1:t}$. 
To define the social information gathered during equilibration following the observation at time $t$, let $\Theta_{t,0} = \Theta \cap (\Theta - $Soc$_{t-1})$,   $\soc_{t,0} = \soc_{t-1},$  and set
\begin{equation} \label{E:social_gen}
\soc_{t,l} :=  \LLR \left( {\rm Priv}_{1:t-1}^{(i)} \in \Theta_{1:t-1}, \text{ and }  \sum_{k = 1}^t  {\rm Priv}_{k}^{(j)} \in \Theta_{t,l-1} \right), 
\end{equation}
for $l \geq 1$, where $ \Theta_{t,l-1} \equiv \bigcap_{n = 0}^{l-1} (\Theta - \soc_{t,n})$.

\begin{theorem}
\label{thm:integration}
Assume that, in a recurrent network, two agents have made a sequence of private observations, $\xi_{1:t}$, followed by $l$ observations of the subsequent decision states of each other. If neither agent has made a decision, then the belief of each  is given by 
\begin{equation} \label{E:general}
y_{t,l}^{(i)} = \sum_{k = 1}^t \obb ik + \soc_{t,l},
\end{equation}
for $1 \leq t, i = 1,2$. The exchange of social information terminates in a finite number
of steps after an observation, either when a decision is made or after no further social information
is available at some step $l = N_t$.  The private evidence in Eq.~\eqref{E:general} is a random variable (depends on realization), while the social evidence is
independent of realization and equal for both agents.
\end{theorem}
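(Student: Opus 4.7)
The plan is to proceed by induction on the observation time $t$, with a nested induction on the social exchange step $l$. The base case $t=1$ is essentially done in the paragraphs preceding the theorem: the recursive construction of $\Theta_{1,l}$ together with the Bayes computation gives $y_{1,l}^{(i)} = \mathrm{Priv}_1^{(i)} + \mathrm{Soc}_{1,l}$ with $\mathrm{Soc}_{1,l}$ depending only on the measurement distributions and the thresholds, not on the private draws. So the real work is the inductive step at the observation level and the finiteness argument.

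For the inductive step, suppose the conclusion holds through time $t-1$, terminating at step $N_{t-1}$ with total social contribution $\mathrm{Soc}_{t-1} = \mathrm{Soc}_{t-1,N_{t-1}}$, and with the joint event of non-decision by agent $j$ through time $t-1$ characterized exactly by $\mathrm{Priv}_{1:t-1}^{(j)} \in \Theta_{1:t-1}$. After the private observation $\xi_t^{(i)}$, one more step of social exchange, and Bayes' rule, I would compute
\begin{align*}
y_{t,l}^{(i)} &= \log \frac{P(\xi_t^{(i)},\, d_{1:t,\cdot}^{(j)} = 0 \mid H^+,\, I_{t-1}^{(i)})}{P(\xi_t^{(i)},\, d_{1:t,\cdot}^{(j)} = 0 \mid H^-,\, I_{t-1}^{(i)})} + y_{t-1}^{(i)}.
\end{align*}
By conditional independence of $\xi_t^{(i)}$ from everything observed by agent $j$ given $H^{\pm}$, this factors into $\mathrm{LLR}(\xi_t^{(i)})$ plus an LLR depending only on the non-decision events of $j$, which by the inductive characterization depends only on whether $\mathrm{Priv}_{1:t}^{(j)}$ lies in the explicitly defined set $\Theta_{t,l-1}$ appearing in \eqref{E:social_gen}. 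Adding this to the inductive hypothesis for $y_{t-1}^{(i)}$ yields \eqref{E:general}. Crucially, the second summand is the same function for $i=1$ and $i=2$ because the agents are identical and the set $\Theta_{t,l-1}$ is symmetric in them; and it is deterministic because $\Theta_{t,l-1}$ was built purely from earlier $\mathrm{Soc}$ values and the known distributions $f_{\pm}$. The private sum is manifestly random across realizations.

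For termination, the key observation is that $\Theta_{t,l}$ is nested and decreasing in $l$: $\Theta_{t,l+1}\subseteq \Theta_{t,l}$ since each additional non-decision can only further restrict the feasible set of cumulative priors. Because $\Pi$ defined in \eqref{E:Pi} is finite, so is the set of cumulative priors reachable by time $t$, and hence $\Pi^t \cap \{\mathrm{Priv}_{1:t}: \sum_{k=1}^t \mathrm{Priv}_k \in \Theta_{t,l}\}$ can strictly decrease only finitely many times. Once it stabilizes at some $l = N_t$, condition \eqref{E:terminate} holds in its time-$t$ analogue, $\mathrm{Soc}_{t,N_t+1} = \mathrm{Soc}_{t,N_t}$, and the two agents correctly conclude that no further belief update is possible. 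I would then verify that $\Theta_{1:t} := \{\mathrm{Priv}_{1:t}^{(j)} : \sum_{k=1}^l \mathrm{Priv}_k^{(j)} \in \Theta_l \text{ for all } l \le t\}$ with $\Theta_t = \bigcap_{n=0}^{N_t}(\Theta - \mathrm{Soc}_{t,n})$ correctly encodes the non-decision event, closing the induction.

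The main obstacle I anticipate is bookkeeping: one must verify carefully that the information available to each agent about its neighbor at step $(t,l)$ is \emph{exactly} that $\mathrm{Priv}_{1:t}^{(j)} \in \Theta_{1:t-1} \cap \{\sum_k \mathrm{Priv}_k \in \Theta_{t,l-1}\}$ and no more, so that the Bayes factor collapses to the LLR of this single set event. This requires showing that any stronger inference about $\mathrm{Priv}_{1:t}^{(j)}$ would itself depend on higher-order social reasoning that is already baked into the recursive construction of $\mathrm{Soc}_{t,l}$; otherwise the decomposition \eqref{E:general} would fail. Once this characterization is established, the additive decomposition, the symmetry $\mathrm{Soc}^{(1)}_{t,l}=\mathrm{Soc}^{(2)}_{t,l}$, and the finite termination follow cleanly from the induction and the finiteness of $\Pi$.
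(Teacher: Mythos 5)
Your plan follows essentially the same route as the paper's proof: a double induction (outer on the observation time $t$, inner on the equilibration step $l$), with the conditional-independence factorization yielding the additive private/social decomposition and the finiteness of $\Pi$ forcing termination of each equilibration. Your termination argument is in fact slightly more explicit than the paper's --- you note that the nested sets of reachable cumulative private evidence can strictly shrink only finitely often, whereas the paper simply asserts the stopping condition $\Pi \cap \Theta_{t+1,N_t} = \Pi \cap \Theta_{t+1,N_t+1}$ --- but the substance is the same.
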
 

\begin{proof}  
{We prove the theorem using double induction.  The outer induction is on the times of the private observations, $t$, while
the inner induction is on the steps in the equilibration process. The basic case of the induction over $t$ was  proved in the previous section.} 
Induction over $t$ in this case follows similarly.  To address the inner steps, by a slight abuse of notation, let 
$d_{1:t, 1:l}^{(i)}$ denote the sequence of decision states up to the $l$-th step in the equilibration process. If no 
decision has been made at this step, we write $d_{t, l}^{(i)} = 0$.  This implies that no previous decision
has been made, and we denote this by writing $d_{1:t, 1:l}^{(i)} = 0.$

At $l=0$, we assume that equilibration following
 private observation $\xi_t^{(i)}$ terminates on step  $N_t$.  Conditional independence of the observations implies that
$P(\xi_{1:t+1}^{(i)}, d_{1:t, 1:N_t}^{(i)} = 0, d_{1:t, 1:N_t}^{(j)} =0 | \hpm) = P(\xi_{t+1}^{(i)} | \hpm) P(\xi_{1:t}^{(i)},  d_{1:t, 1:N_t}^{(i)} = 0, d_{1:t, 1:N_t}^{(j)} =0 | \hpm), $
so that  
$$
y_{t,0}^{(i)} = \LLR (\xi_{1:t+1}^{(i)}, d_{1:t, 1:N_t}^{(i)} = 0, d_{1:t, 1:N_t}^{(j)} =0 )  =  \sum_{k = 1}^{t+1} \obb ik + \soc_{t+1,0},
$$
where we used $\soc_{t+1,0} = \soc_t.$

Suppose no decision is made in the following $l \geq 0$ equilibration steps, so that $d_{1:t+1, 1:l}^{(i)} = 0$ for 
$i = 1,2$. For all sequences of measurements $\xi_{1:t+1}^{(i)}$ that are consistent with this absence of a decision, we 
 have 
 $$
 P(\xi_{1:t+1}^{(i)}, d_{1:t+1, 1:l}^{(i)} = 0, d_{1:t+1, 1:l}^{(j)} =0 | \hpm) = P(\xi_{1:t+1}^{(i)}, d_{t+1, l-1}^{(i)} = 0, d_{t+1, l}^{(j)} =0   | \hpm),
 $$
 and therefore
 $$
 P(\xi_{t+1}^{(i)}, d_{t+1, l-1}^{(i)} = 0, d_{t+1, l}^{(j)} =0   | \hpm) = 
P(  \xi_{1:t+1}^{(i)}  | \hpm )  P( d_{t+1,l}^{(j)} = 0 |  d_{t+1,l-1}^{(i)} = 0 , \hpm).
$$
It follows that
$$ 
y_{t,l+1}^{(i)} = \LLR (  \xi_{1:t+1}^{(i)}) + \LLR ( d_{t+1,l}^{(j)} = 0 |  d_{t+1,l-1}^{(i)} = 0) =  \sum_{k = 1}^{t+1} \obb ik + \soc_{t+1,l+1},
$$
where $\soc_{t+1,l+1}$ is defined in Eq.~\eqref{E:social_gen}. This exchange of social information stops at $l = N_{t+1}$ 
when $ \Pi \cap \Theta_{t+1, N_{t}} = \Pi \cap \Theta_{t+1,N_{t}+1}$, and neither agent learns anything further from the absence of a decision by its counterpart.
\end{proof}

{\bf Belief update after a decision.}  The following proposition shows what happens when the belief of one of the agents crosses a threshold. 
 
\begin{proposition}
\label{prop:bump2}
Suppose that in a recurrent, two-agent network agent $i$ makes a decision after a private observation at time $T$ during the $n$th step of the subsequent social information exchange process, $d_{T,n}^{(i)} = + 1$. Then agent $j \neq i$, updates its belief as
\begin{align*}
y_{T,n+1}^{(j)} = \obb {j}{1:T} + \soc^+_{T,n+1} , \ \ \ \ \text{with} \ \ \ \ 
\theta_+ - \soc_{T,n}  <  \soc^+_{T,n+1} < \theta_+ - \soc_{T,n} + \ep_{T,n}^+,
\end{align*}
where we can bound
\begin{align*}
\ep_{T,0}^+ \leq  \sup_{\xi_{1:T} \in {\mc C}_+(T,0)}\LLR (\xi_{T}), \ \ \ \ \text{and} \ \ \ \ \ep_{T,n}^+  \leq (\soc_{T,n}-\soc_{T,n-1}), \ \ n>0,
\end{align*}
 and ${\mc C}_+(T,0)$ is the set of all chains of observations leading an agent to choose $H^+$ at timestep $(T,0)$.  An analogous result holds for  $d_{T,n}^{(i)} = - 1$. 
\end{proposition}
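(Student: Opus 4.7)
The plan is to extend the argument of Proposition~\ref{prop:bump} to the recurrent setting. Because agent $i$ has already augmented its belief by the deterministic social increment $\soc_{T,n}$ (Theorem~\ref{thm:integration}) at the moment of its decision, the threshold that its private evidence must clear is shifted from $\theta_+$ down to $\theta_+ - \soc_{T,n}$; the overshoot above this shifted threshold plays the role that $\LLR(\xi_T)$ played in Proposition~\ref{prop:bump}.

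First I would define ${\mc C}_+(T,n)$ as the set of private sequences $\xi_{1:T}^{(i)}$ that, combined with the deterministic social increments of Theorem~\ref{thm:integration}, keep agent $i$ undecided at every prior time step and equilibration step and produce the first crossing $y_{T,l}^{(i)} \geq \theta_+$ precisely at $l = n$.  Membership in ${\mc C}_+(T,n)$ thus requires $\sum_{k=1}^{t}\obb{i}{k} \in \Theta_{1:t}$ for $t<T$, together with $\obb{i}{1:T} + \soc_{T,l} \in \Theta$ for $l<n$ and $\obb{i}{1:T} + \soc_{T,n} \geq \theta_+$.  For any such chain the final inequality gives $f_+(\xi_{1:T}^{(i)}) \geq e^{\theta_+ - \soc_{T,n}}\, f_-(\xi_{1:T}^{(i)})$.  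Summing over ${\mc C}_+(T,n)$ converts this pointwise likelihood-ratio bound into the analogous conditional-probability inequality, and taking logarithms yields $\soc^+_{T,n+1} \geq \theta_+ - \soc_{T,n}$; the strict inequality then follows because discreteness of $\Pi$ generically produces at least one admissible chain with $\obb{i}{1:T} > \theta_+ - \soc_{T,n}$.

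For the upper bound I would bound the overshoot $\ep := \obb{i}{1:T} - (\theta_+ - \soc_{T,n})$ uniformly over ${\mc C}_+(T,n)$ and propagate that bound through the summation using $f_+/f_- = e^{\obb{i}{1:T}}$, exactly as at the close of the proof of Proposition~\ref{prop:bump}.  Two subcases arise.  When $n=0$, termination of the equilibration after the observation at time $T-1$ forces $\obb{i}{1:T-1} < \theta_+ - \soc_{T-1} = \theta_+ - \soc_{T,0}$, so $\ep < \LLR(\xi_T^{(i)}) \leq \sup_{{\mc C}_+(T,0)} \LLR(\xi_T)$. When $n>0$, the absence of a decision at step $n-1$ gives $\obb{i}{1:T} < \theta_+ - \soc_{T,n-1}$, which immediately yields $\ep < \soc_{T,n} - \soc_{T,n-1}$.

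The main obstacle will be the book-keeping: every chain in ${\mc C}_+(T,n)$ must simultaneously respect all prior admissibility constraints $\Theta_{1:t}$ and $\Theta_{T,l}$, while those very sets are themselves built out of the $\soc_{T,l}$ whose deterministic character is only guaranteed by Theorem~\ref{thm:integration}.  A careful nested induction on $(T,n)$ that propagates both the chain constraints and the values of $\soc_{T,l}$ in parallel keeps these dependencies consistent.  The case $d_{T,n}^{(i)} = -1$ follows by a symmetric argument with $\theta_+$ replaced by $\theta_-$ and the inequalities reversed.
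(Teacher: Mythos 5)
Your proposal is correct and follows essentially the same route as the paper's proof in Appendix~\ref{bnd2coup}: split into the cases $n=0$ (decision triggered by the private observation, overshoot bounded by $\LLR(\xi_T)$) and $n>0$ (decision triggered during equilibration, overshoot bounded by $\soc_{T,n}-\soc_{T,n-1}$ via the absence of a decision at step $n-1$), then marginalize over the admissible chains exactly as in Proposition~\ref{prop:bump} and assemble the belief of agent $j$ from the LLR decomposition of Theorem~\ref{thm:integration}. No substantive differences.
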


%

\begin{figure}[t!]
\hspace{1cm}\includegraphics[width = 12cm]{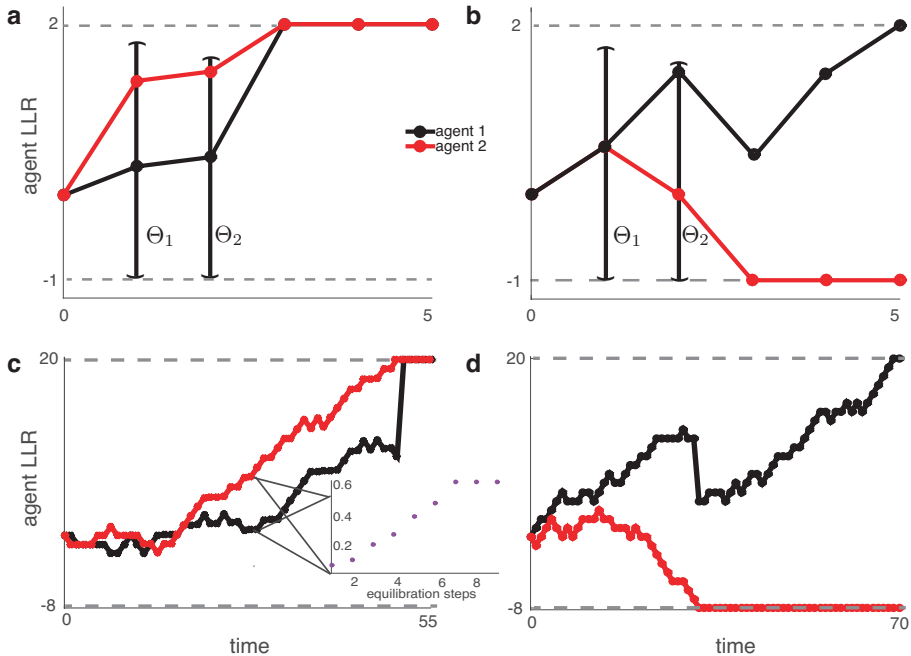}
\caption{The belief (LLR) of  agents 1 and 2 in a recurrent network with asymmetric thresholds when (a) the agents agree,  and (b) agents disagree on a final choice.  Also shown are the intervals $\Theta_1, \Theta_2$,  resulting from the equilibration following the first two observations. Each agent knows that the LLR of its counterpart lies in this interval after equilibration ends.  Although the beliefs evolve stochastically, the sequence $\Theta_1, \Theta_2$ is fixed across realizations.  Here, $p = e/7$, $q = 1/7$, and $s$ is determined from $s = 1 - p - q$.  Also, $\theta_+ = 2$ for (a)-(b) and $\theta_+ = 20$ for (c) - (d).  $\theta_-  = -1$ for (a)-(b) and $\theta_-= -8$ for (c)-(d).  Inset:  Social information obtained from equilibration converges in seven steps at the indicated time. {Here, $H = H^+$.} }
\label{equil1}
\end{figure}

A proof of Proposition~\ref{prop:bump2} is given in Appendix~\ref{bnd2coup}. This proposition shows that any social information obtained before observing a decision is subsumed in the information obtained from observing a decision {and that the social information acquired from a neighbor's decision is roughly that neighbor's private evidence. After one of the agents makes a decision, the other agent continues gathering private observations until they make their own decision.}

If the two agents have different decision thresholds, the expression for the post-decision belief is more involved but still computable. For simplicity we forgo further discussion of this case. On the other hand, when the  the thresholds and evidence distributions are symmetric, the evidence accumulation process is much simpler.

\begin{proposition}
When the distributions $f_+$ and $f_-$ are symmetric and the agents have the same symmetric thresholds ($\pm \theta_{\pm} = \theta$), then $ \soc_{t} = 0$ until private evidence leads one of the agents to make a decision. Thus there is no exchange of social information until one of the agents makes a decision.
\end{proposition}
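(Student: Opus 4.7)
The plan is to prove, by nested induction on the observation time $t$ and on the inner equilibration step $l$, that as long as neither agent has made a decision, both of the following invariants hold: (i) $\soc_{t,l} = 0$, and (ii) every set $\Theta_{t,l}$ (and consequently every sequence-level set $\Theta_{1:t}$) is symmetric under negation of each coordinate, i.e.\ $\Theta_{t,l} = -\Theta_{t,l}$ and $(\mathrm{Priv}_{1:t}^{(j)}) \in \Theta_{1:t}$ iff $(-\mathrm{Priv}_{1:t}^{(j)}) \in \Theta_{1:t}$. Once these invariants are established, the conclusion $\soc_t = \soc_{t,N_t} = 0$ is immediate for all $t$ up to the first decision.

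The single ingredient driving the induction is the following symmetry lemma, which I would state and prove first: if $A$ is any subset of observation sequences $\xi_{1:t}$ whose membership depends only on the partial sums $\sum_{k=1}^l \LLR(\xi_k)$ lying in sets that are invariant under negation, then $P(\xi_{1:t} \in A \mid \hp) = P(\xi_{1:t} \in A \mid \hm)$, and hence $\LLR(A) = 0$. The proof is a change of variables: the involution $\Phi$ from Definition~\ref{def:symm} acts pointwise on sequences, sends $\LLR(\xi_k)$ to $-\LLR(\xi_k)$, and by $f_+(\xi) = f_-(\Phi(\xi))$ it intertwines the two conditional laws, $f_+(\xi_{1:t}) = f_-(\Phi(\xi_{1:t}))$. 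Since $A$ is invariant under this involution (because each defining partial-sum constraint set is symmetric), summing $f_+$ over $A$ equals summing $f_-$ over $\Phi(A) = A$.

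With this lemma in hand, the induction is routine. For the base case, $\soc_{1,0} = 0$ by definition and $\Theta_{1,0} = \Theta = (-\theta, \theta)$ is symmetric since $\theta_+ = -\theta_- = \theta$. For the inner step, suppose $\soc_{1,n} = 0$ and $\Theta_{1,n}$ is symmetric for all $n \leq l-1$. Then by Eq.~\eqref{E:social_gen} applied at $t=1$,
\[
\soc_{1,l} = \LLR\!\left(\mathrm{Priv}_1^{(j)} \in \Theta_{1,l-1}\right) = 0
\]
by the symmetry lemma, and $\Theta_{1,l} = \Theta \cap (\Theta - \soc_{1,l}) = \Theta$ remains symmetric. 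Iterating through termination at step $N_1$ gives $\soc_1 = 0$ and $\Theta_1 = \Theta$. For the outer step from $t$ to $t+1$, the inductive hypothesis $\soc_t = 0$ and the symmetry of $\Theta_{1:t}$ yield $\soc_{t+1,0} = \soc_t = 0$ and $\Theta_{t+1,0} = \Theta$; then the inner induction repeats verbatim with partial-sum constraints that are each symmetric, so the symmetry lemma again forces every $\soc_{t+1,l} = 0$.

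The main obstacle, and the place requiring most care in the write-up, is not the symmetry argument itself but the bookkeeping: one must verify that the event defining $\soc_{t,l}$ in Eq.~\eqref{E:social_gen}, namely $\mathrm{Priv}_{1:t-1}^{(i)} \in \Theta_{1:t-1}$ together with $\sum_{k=1}^t \mathrm{Priv}_k^{(j)} \in \Theta_{t,l-1}$, is genuinely invariant under the coordinatewise involution $\Phi$, which reduces to checking that every constraint set appearing in the recursive definition of $\Theta_{1:t}$ has already been shown symmetric. This is precisely what the inductive hypothesis provides, so the argument closes.
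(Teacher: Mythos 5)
Your proposal is correct and follows essentially the same route as the paper's proof: an induction over observation times in which the $\Phi$-involution argument of Proposition~\ref{prop:symmetric} forces each non-decision event to have zero log-likelihood ratio, so the equilibration terminates immediately after each private observation. The paper states this tersely by citing Proposition~\ref{prop:symmetric}; you have simply made the nested induction and the symmetry invariance of the sets $\Theta_{t,l}$ explicit, which is the same idea with fuller bookkeeping.
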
 

\begin{proof}
The argument is similar to that used to prove Proposition~\ref{prop:symmetric}. We can proceed inductively again: If the two agents have not made a decision after the first observation, by symmetry this does not provide any evidence for either hypothesis $H = H^{\pm}$. Observing each other's decisions after this first observation hence results in no new information,
\[
\soc_{1,1}= \LLR (y_{1,0}^{(j)} \in \Theta ) = 0.
\]
Therefore, the update process terminates immediately, and both agents proceed to the next observation.  
As shown  in Proposition~\ref{prop:symmetric}, further observations provide no new social information unless a decision is made.  The two agents therefore continue to accumulate private information until one of them makes a decision. 
\end{proof}

Fig.~\ref{equil1} provides examples of evidence accumulation in a two-agent recurrent network.  In Fig.~\ref{equil1}a,b, we illustrate the process with relatively small thresholds and show how the intervals $\Theta_n$ shrink at the end of each social evidence exchange (equilibration) following a private observation. Note that the sequence of intervals is the same in both examples because the social information exchange process is deterministic.  In this example equilibration ends after two steps.  Figs.~\ref{equil1}c,d,  provide an example with strongly asymmetric decision thresholds.   These examples also show that  the beliefs of the two agents do not have to converge, and  the agents do not need to agree on a choice, in contrast to classic studies of {\em common knowledge}~\cite{aumann1976,geanakoplos1994,milgrom1982}. 


\section{Accumulation of Evidence on General Networks}
 In networks that are not fully connected, rational agents need to take into account the impact of the decisions of agents that they do not observe directly. To do so they marginalize over all unobserved decision states.  This computation can be complex, even when thresholds and evidence distributions are symmetric. 
 
 To illustrate we begin by describing the example of  agents with symmetric decision thresholds and measurement distributions on a directed chain. Symmetry makes the computations more transparent, as  the absence of a decision is not informative about the belief of any agent, hidden or visible. Social information is therefore only communicated when an agent directly observes a choice. Such an observation leads to a jump in the agent's belief and can initiate a cascade of decisions down the chain~\cite{caginalp2017decision}. 

We note that once an agent in the network makes a decision, symmetry can be broken:  agents know that all others who have observed the decision have evidence favoring the observed choice.  {This results in a network evidence accumulation process akin to one when agents have asymmetric thresholds.}   As we have seen {in Sections \ref{sec:unidir} and \ref{sec:bidsymm}}, once symmetry is broken, even the absence of a choice provides social information, leading to belief equilibration. We therefore only describe evidence accumulation up to a first decision.

\subsection{Terminology and Notation}

In a social network of $N$ agents, we again assume that each agent makes a private observation at every time step. After incorporating the evidence from this observation the agent then updates its decision state and shares it with its neighbors. A directed edge from agent $j$ to $i$, denoted by $j \to i$, means  that agent $j$ communicates its decision state to agent $i$. The set of neighbors that agent $i$ observes is denoted by ${\mc N}^{(i)}\ $.

Agent $i$ receives social information from all agents in ${\mc N}^{(i)}$, but must  also  take into account decisions of \emph{unobserved} agents. We define the set of all agents not visible to agent $i$ as
\[
{\mc U}^{(i)} = \{ j :  j \notin {\mc N}^{(i)}, \text{ and } j \neq i  \} .
\]
Thus all agents in the network, besides agent $i$, belong to ${\mc N}^{(i)}$ or ${\mc U}^{(i)}$.  {Therefore $\mathcal{N}^{(i)}, \mathcal{U}^{(i)},$ and $\{i\}$ form a partition of all nodes in the network.}

We denote the set of decisions by the neighbors of agent $i$ following the observation at time $t$ by $d^{{\mc N}^{(i)}}_t = \{ d_{t}^{(k)} : k \in {\mc N}^{(i)} \}$.  Similarly, the set of the decisions by unobserved agents is denoted by $d^{{\mc U}^{(i)}}_t$. More generally,  $d^{{\mc N}^{(i)}}_{1:t}$ denotes the sequence of decision states of the neighbors of agent $i$ up to and including the decision following the observation at time $t$: $d^{{\mc N}^{(i)}}_{1:t} = \{ d^{{\mc N}^{(i)}}_s : 1 \leq s \leq t\}$.   We will see that in the case of symmetric thresholds and observations no equilibration occurs, so these decision states describe information available to each agent in the network completely until one of the agents makes a choice. At time $t$, the private and social observations obtained by agent $i$ are therefore 
$I_t^{(i)} = \{ \xi^{(i)}_{1:t} , d^{{\mc N}^{(i)}}_{1:t-1} \}$.    As before, we denote the private information by $\obb i{1:t} = \LLR ( \xi^{(i)}_{1:t})$.

\subsection{Non-decisions}
\label{sec:predec}
{In a social network of agents with symmetric thresholds and measurement distributions, two  properties of decision information simplify computations: an absence of a decision is uninformative, and 
once a decision is observed, the resulting social information is additive. Therefore,
$$
y_t^{(i)} = \LLR ( \xi^{(i)}_{1:t}, d^{\mathcal{N}^{(i)}}_{1:t}=0) = \text{Priv}^{(i)}_{1:t}.
$$
The proof of this equality follows an argument similar to that given in Sections~\ref{sec:unidir} and~\ref{sec:bidsymm}.  The main difference is that in the present case each agent marginalizes over unobserved decision states.  Symmetry implies that for each decision history of unobserved agents, $d_{1:t}^{\mathcal{U}^{(i)}} $, there is a corresponding  opposite decision history, $-d_{1:t}^{\mathcal{U}^{(i)}} $, obtained by flipping the sign of each decision in the vector $d_{1:t}^{\mathcal{U}^{(i)}} $ and leaving non-decisions unaffected.   By symmetry, both decision histories are equally probable, and their contributions cancel when agent $i$ computes its belief. Hence no agent $i$  obtains information from observing the absence of a decision by any of its neighbors, and equilibration  never starts.  As a result agents independently accumulate evidence, and their neighbors' decision states only become informative when one of the neighboring agents makes a choice. As in the case of two agents, a choice by any agent will lead to a jump in the belief of all observing neighbors. }\\

\subsection{Example of Marginalization:  3-Agents on a Line}
\label{sec:line3}
We demonstrate the computations needed to account for unobserved agents using the example shown in Fig.~\ref{fig:3line}. In this case a decision by agent $3$ is not observed by any other agent. {Agent 1 and 2 therefore accumulate information as described in Section~\ref{sec:uni}.  To understand decisions in this network it is therefore sufficient to only consider the computations agent 3 must perform upon observing a decision by agent 2.  We will see that observing a decision in a network that is not fully connected can 
result in a jump in the observer's LLR that is greater than the the threshold of the observed decision.  The additional information  comes from the possibility that decisions of unobserved agents could have lead to the decision of an observed agent.}

\begin{figure}[t!]
\centering
\includegraphics[width = 8cm]{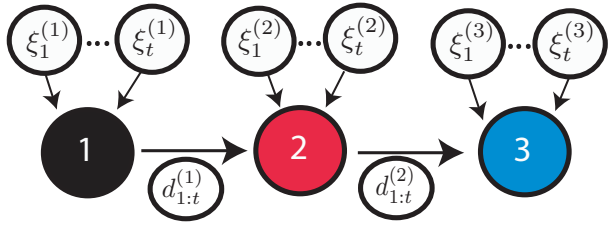}
\caption{Agents on a line.  Agent 3 observes the decisions of agent 2, but not of agent 1. However, agent 3 still needs to take into account the possibility that a choice by agent 1 has caused a decision by agent 2.}
\label{fig:3line}
\end{figure}

Agent 3 has no social information before observing a decision by agent 2. After observing a decision  by agent 2 at time $T^{(2)}$, agent 3 updates its belief by marginalizing over possible decision histories of agent 1: 
\begin{equation} \label{E:decomposition2}
\begin{split}
P(d_{T^{(2)}}^{(2)} = \pm 1 | H^{\pm}) &=  P(d_{T^{(2)}}^{(2)} = \pm 1 , d_{T^{(2)}}^{(1)} = 0| H^{\pm})  +  \sum_{s =1}^{T^{(2)}} \sum_{d = \pm 1} P(d_{T^{(2)}}^{(2)} = \pm 1 , d_{s-1}^{(1)} = 0, d_s^{(1)} = d| H^{\pm}).
\end{split}
\end{equation}
A choice of agent 2 can be triggered by either: (a) a private observation leading to the belief $y_t^{(2)}$ reaching one of the thresholds, $\theta_{\pm}$, or (b) the decision of agent 1 causing a jump in the belief 
$y_t^{(2)}$ above threshold $\theta_+$ or below threshold $\theta_-$. Possibility (b) is captured by the term 
\begin{align}
P(d_{T^{(2)}}^{(2)} = \pm 1 , d_{T^{(2)}-1}^{(1)} = 0, d_{T^{(2)}}^{(1)} = \pm 1| H^{\pm}) \label{1trig2}
\end{align}
in Eq.~\eqref{E:decomposition2}, while possibility (a) corresponds to all the other summands.

An argument equivalent to that in Proposition~\ref{prop:bump} shows that the social information communicated in case (a) is close to $\pm \theta$ for $d_{T^{(2)}}^{(2)} = \pm 1$.  However, in the second case the belief of agent 2 satisfies $y_{T^{(2)}}^{(2)} \in [\theta, 2 \theta]$ for $d_{T^{(2)}}^{(2)} = + 1$ or $y_{T^{(2)}}^{(2)} \in [-2\theta, -\theta]$ for $d_{T^{(2)}}^{(2)} = - 1$, modulo a small correction.  Agent 3 updates its belief by weighting both possibility (a) and (b), and hence increases its belief by an amount greater than $\theta$. 

To show this, we note that for any $s=1,..., {T^{(2)}}-1$,
\begin{align}
\frac{P(d^{(2)}_{T^{(2)}} =  1, d^{(1)}_{s} = \pm1|\hp)}{P(d^{(2)}_{T^{(2)}} = 1, d^{(1)}_{s} = \pm1|\hm)} \approx \frac{P(d^{(2)}_{T^{(2)}} =  1, d^{(1)}_{s} = 0|\hp)}{P(d^{(2)}_{T^{(2)}} = 1, d^{(1)}_{s} = 0 |\hm)} \approx e^{\theta},
\label{nicecase}
\end{align}
where the last approximation follows from again assuming  ${\rm Priv}_t^{(2)}$ is small. The situations where a decision from agent 1 does {\em{not}} immediately cause a decision from agent 2, conditioned on $H^+$, are described by  
\begin{align*}
\zeta_{T^{(2)}}^+ := & P(d_{T^{(2)}}^{(2)} =  1 , d_{T^{(2)}}^{(1)} = 0| \hp) +  \sum_{s = 1}^{T^{(2)}} P(d_{T^{(2)}}^{(2)} =  1 , d_s^{(1)} = -1, d_{s-1}^{(1)} = 0| \hp)  \\
&  + \sum_{s=1}^{{T^{(2)}}-1} P(d_{T^{(2)}}^{(2)} = 1 ,  d_s^{(1)} = 1,  d_{s-1}^{(1)} = 0| \hp).
\end{align*}
Using Eq.~(\ref{nicecase}), we have
\begin{align*}
\zeta_{T^{(2)}}^+ \approx & \ e^{\theta} \cdot \left[ P(d_{T^{(2)}}^{(2)} =  1 , d_{{T^{(2)}}-1}^{(1)} = 0| \hm) +  \sum_{s = 1}^{T^{(2)}} P(d_{T^{(2)}}^{(2)} =  1 , d_s^{(1)} = -1, d_{s-1}^{(1)} = 0| \hm) \right.  \\
&  \left. +  \sum_{s = 1}^{T^{(2)}-1} P(d_{T^{(2)}}^{(2)} = 1 ,  d_s^{(1)} = 1, d_{s-1}^{(1)} = 0| \hm) \right].
\end{align*}
By Eq.~\eqref{E:decomposition2}, we have
\begin{align*}
P(d_{T^{(2)}}^{(2)} =  1 | H^+) =&  \ \zeta_{T^{(2)}}^+ + P(d_{T^{(2)}}^{(2)} = 1, d_{{T^{(2)}}}^{(1)} = 1, d_{{T^{(2)}}-1}^{(1)} = 0 |\hp) \\
P(d_{T^{(2)}}^{(2)} = 1 | H^-) \approx &  \ e^{- \theta} \zeta_{T^{(2)}}^+ + P(d_{T^{(2)}}^{(2)} = 1, d_{{T^{(2)}}}^{(1)} = 1, d_{{T^{(2)}}-1}^{(1)} = 0 |\hm),
\end{align*}
so that
\begin{equation}
\soc_{T^{(2)}}^{(3)} \approx \log\Big(\frac{P(d_{T^{(2)}}^{(2)} = 1, d_{{T^{(2)}}}^{(1)} = 1, d_{{T^{(2)}}-1}^{(1)} = 0 |\hp)+\zeta}{P(d_{T^{(2)}}^{(2)} = 1, d_{{T^{(2)}}}^{(1)} = 1, d_{{T^{(2)}}-1}^{(1)} = 0 |\hm)+\zeta e^{-\theta}}\Big).
\label{soc_33}
\end{equation}
Note that
\begin{equation}
P(d_{T^{(2)}}^{(2)} = 1, {d_{T^{(2)}-1}^{(2)} = 0,\text{}} d_{{T^{(2)}}}^{(1)} = 1, d_{{T^{(2)}}-1}^{(1)} = 0 |H^{\pm}) = P(y_{T^{(2)}}^{(2)} \in [0,\theta)|H^{\pm})P(d_{{T^{(2)}}}^{(1)} = 1, d_{{T^{(2)}}-1}^{(1)} = 0 |H^{\pm}),
\label{reviewer1}
\end{equation}
and
\begin{equation}
P(d_{{T^{(2)}}}^{(1)} = 1, d_{{T^{(2)}}-1}^{(1)} = 0|\hp) \approx e^{\theta} P(d_{{T^{(2)}}}^{(1)} = 1, d_{{T^{(2)}}-1}^{(1)} = 0|\hm) \approx \frac{1}{1 + e^{-\theta}}.
\label{reviewer2}
\end{equation}
{Eq.~(\ref{reviewer1}) omits the possibility that the beliefs of agents 1 and 2 cross their respective thresholds at the same time.  However, this event has small probability when private observations are not very informative, and thresholds are reasonably large, so we do not consider this case in our calculations.}

{The first approximation in Eq.~(\ref{reviewer2}) follows from the fact that agent 1's decision $\hp$ triggers a jump in agent 2's belief approximately equal to $\theta$.  The second approximation in Eq.~(\ref{reviewer2}) is a standard result from decision-making literature and can be obtained by calculating the exit probability of a drift-diffusion process on a bounded interval through the boundary in the direction given by the drift \cite{Bogacz2006}.}  Let
\begin{equation}
R_+(t) : =  \LLR (y_t^{(2)} \in [0,\theta)|y_{t}^{(2)} \in (-\theta, \theta)) =  \LLR (y_t^{(2)} \in [0,\theta)) \geq 0.
\label{posprob}
\end{equation}
Note that we can increase $R_+(t)$ by increasing $\theta$ without changing the measurement distributions:  {larger thresholds mean that more time is required to make a decision. In turn, longer decision times mean that the belief of an agent is more likely to lie in the half interval $[0,\theta)$ when $H = \hp$, resulting in a larger $\LLR (y_t^{(2)} \in [0,\theta))$. }

{Applying Eq.~(\ref{reviewer1}) to Eq.~(\ref{soc_33}) and suppressing the notation $d_{{T^{(2)}}-1}^{(1)} = 0 $, we obtain
\[
\soc_{T^{(2)}}^{(3)} \approx \log\Big(\frac{P(y_{T^{(2)}}^{(2)} \in [0,\theta)|\hp)P(d_{{T^{(2)}}-1}^{(1)} = 1|\hp)+\zeta_{T^{(2)}}^+}{P(y_{T^{(2)}}^{(2)} \in [0,\theta)|\hm)P(d_{{T^{(2)}}-1}^{(1)} = 1|\hm) + \zeta_{T^{(2)}}e^{-\theta}}\Big).
\]
The two approximations in Eq.~(\ref{reviewer2}), and Eq.~(\ref{posprob}) then yield
\begin{align}
\soc_{T^{(2)}}^{(3)} &\approx \log\Big(\frac{P(y_{T^{(2)}}^{(2)} \in [0,\theta)|\hp)P(d_{{T^{(2)}}-1}^{(1)} = 1|\hp)+\zeta_{T^{(2)}}^+}{P(y_{T^{(2)}}^{(2)} \in [0,\theta)|\hm)P(d_{{T^{(2)}}-1}^{(1)} = 1|\hp)e^{-\theta} + \zeta_{T^{(2)}}e^{-\theta}}\Big) \nonumber \\
 &\approx \log\Big(\frac{P(y_{T^{(2)}}^{(2)} \in [0,\theta)|\hp)+\zeta_{T^{(2)}}^+(1+e^{-\theta})}{P(y_{T^{(2)}}^{(2)} \in [0,\theta)|\hm) + \zeta_{T^{(2)}}(1+e^{-\theta})}e^{\theta}\Big) \nonumber  \\
& \approx \log\Big(\frac{P(y_{T^{(2)}}^{(2)} \in [0,\theta)|\hp)+\zeta_{T^{(2)}}^+(1+e^{-\theta})}{P(y_{T^{(2)}}^{(2)} \in [0,\theta)|\hp)e^{-R_+(T^{(2)})} + \zeta_{T^{(2)}}(1+e^{-\theta})}e^{\theta}\Big). \nonumber
\end{align}
It follows that
\begin{align}
\soc_{T^{(2)}}^{(3)} &\approx \log\Big(\frac{P(y_{T^{(2)}}^{(2)} \in [0,\theta)|\hp)+\zeta_{T^{(2)}}^+ (1+e^{-\theta})}{P(y_{T^{(2)}}^{(2)} \in [0,\theta)|\hp) e^{-R_+({T^{(2)}})} + \zeta_{T^{(2)}}^+ (1+e^{-\theta})}\Big )+ \theta \geq \theta.
\end{align}
}We therefore see that increasing $R_+({T^{(2)}})$ increases the magnitude of social information received from observing a choice by agent 2.

The impact of a decision $d_{T^{(2)}}^{(2)} = -1$ can be computed similarly, yielding
\begin{align*}
\soc_{T^{(2)}}^{(3)} \approx \log\left(\frac{P(y_{T^{(2)}}^{(2)} \in (-\theta,0] |\hp) e^{-|R_-({T^{(2)}})|} +\zeta_{T^{(2)}}^- (1+e^{-\theta})}{P(y_{T^{(2)}}^{(2)} \in (-\theta,0]|\hp) + \zeta_{T^{(2)}}^- (1+e^{-\theta})} \right)- \theta \leq \theta,
\end{align*}
where
$\zeta_{T^{(2)}}^-$ and
$R_-(t)$  are defined equivalently to $\zeta_{T^{(2)}}^+$ and
$R_+(t)$.

\begin{figure}[t!]
\centering
\includegraphics[width = 15cm]{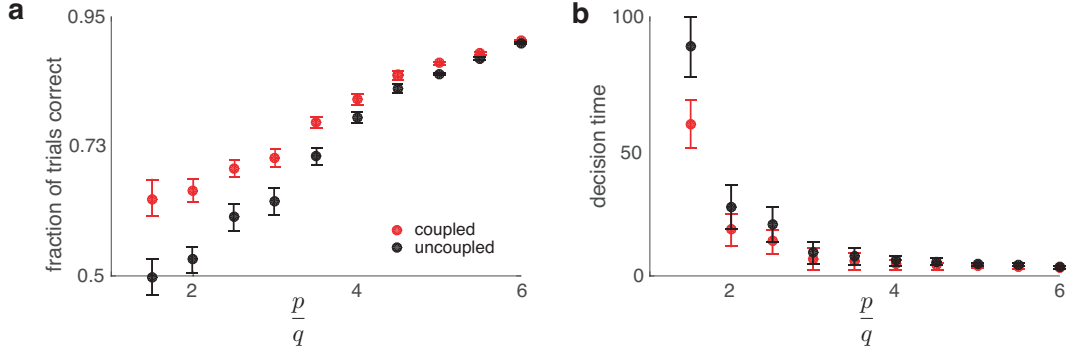}\hfill
\caption{The performance of the three agents on a directed line is better than that of independent agents.  (a) The fraction of trials for which all three agents make the correct choice.  This quantity is larger when agents are allowed to exchange information (coupled), than when agents make decisions independently (uncoupled).  (b) Time required for all three agents to make a decision.  Average decision times are also smaller when agents exchange information.    Here, $|\theta_{\pm}| = 30$, and ratio $p/q$ determines the noisiness of the measurements, as described in Section \ref{sec:4state}.}
\label{3unisumstat}
\end{figure}

In Fig.~\ref{3unisumstat}, we illustrate how  social information impacts decision-time and accuracy.  When $p/q$ is not too large, and hence the signal-to-noise ratio (SNR) of observations is low, the exchange of social information impacts both the accuracy and response time of agents significantly, and agents make the correct decision more frequently than isolated agents. When the SNR is large private information dominates decisions, and the impact of social information is small.

Marginalization becomes more complex when the unseen component, ${\mc U}^{(i)},$ of the network is larger. For instance, if we consider a long, directed line of $n$ agents, when the  $k$th agent makes a decision, the $(k+1)$st agent must marginalize over the preceding $k-1$ agents.  {The computations in this case are similar to what we presented, but more complex.  Assuming the likelihood ratio threshold for the $k$th agent choosing $\hp$ is $e^\theta$ (as in Eq.~(\ref{nicecase}) for the calculation above) and  marginalizing over the decision states of the previous $k-1$ agents  shows that the jump in the $(k+1)$st agent's belief can far exceed $\theta$.}  If the resulting jump in the belief, $y_t^{(k+1)}$, exceeds $2 \theta$, this triggers an immediate decision in agent $k+1$, and all successive agents. This is equivalent to herding behavior described in the economics literature~\cite{banerjee1992,Acemoglu2011}.

\section{Three-Agent Cliques}
\label{sec:3cliq}
In cliques, or all-to-all coupled networks, all agents can observe each others' decisions, and ${\mc U}^{(i)} = \emptyset$ for all $i$.  This simplifies the analysis, as no agent needs to marginalize over the decision states of unobserved agents.  We start by discussing the 
case of three-agent cliques in some detail, and proceed to cliques of arbitrary size in the next section.

As we are assuming symmetry, social evidence is shared only after the first agent  makes a choice.  There is a small probability that private information leads to a concurrent decision by multiple agents, and we consider this case at the end of the section.  Observing a decision by an agent may or may not drive the undecided agents to a decision.  Both the presence and absence of a decision by either remaining agent reveals further social information to its undecided counterpart.  We will see that, akin to the equilibration process, there can be a number of steps of social evidence exchange. Again the remaining agents gather all social information before the next private measurements. 

\begin{figure}[t!]
\includegraphics[width = 12.5cm]{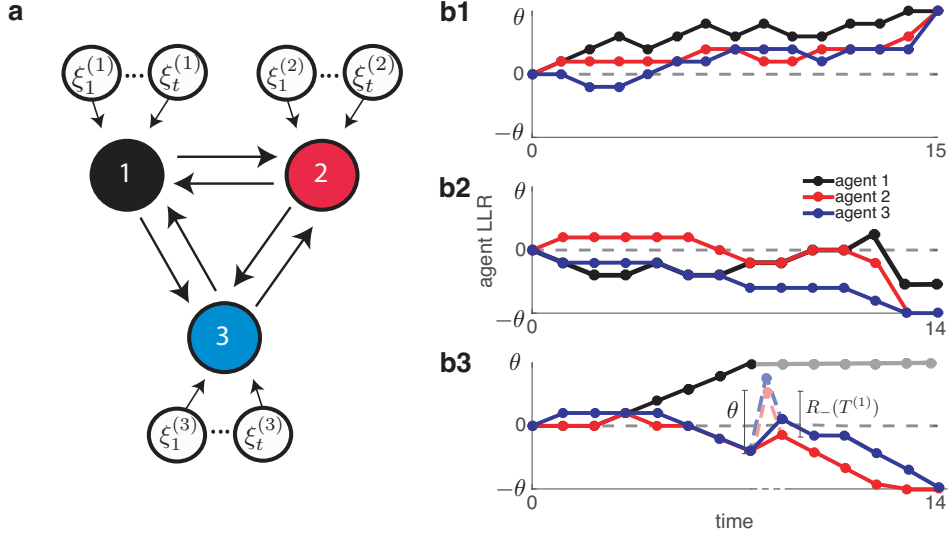}
\caption{(a) In a three-agent clique all three agents make independent observations, and observe each other's decisions.  (b) {Three main possibilities follow a decision: (b1) If the beliefs of the undecided agents are both positive, both follow the decision of the decider (agent 1 here); (b2)  If the decision causes only one of the remaining agents to decide (agent 2 here), this secondary decision leads to a further update in the belief of the remaining agent  (agent 1 here);  (b3)  If neither of the remaining agents decides, they observe each other's indecision and update their belief accordingly.  This update  cannot lead to a decision, and both agents continue to accumulate private evidence.  The dashed portion of the belief trajectory shows the intermediate steps in  social information exchange.  No decision can be reached, and the two agents continue to accumulate information, but now have further knowledge about each other's belief.  Cases in which the private evidence leads to a simultaneous decision by two or three agents are not shown.  {Here, $H = H^+$.}  }}
\label{3clique}
\end{figure} 

For concreteness we assume in the following without loss of generality that agent 1 makes a decision before agents 2 and 3.  {Agent 1's decision leads to  a jump in the beliefs of agents 2 and 3 approximately equal to the threshold corresponding to the decision.  An argument equivalent to the one presented in the previous section shows that  $d^{1}_{T^{(1)}} = \pm 1$ leads to a jump approximately equal to $\pm \theta$ in the remaining agents' belief. As noted in our discussion following Proposition~\ref{prop:bump} we omit the correction $\ep_{T^{(1)}}^+$ to this social information. } 

There are three possible outcomes:  if ${d}^{(1)}_{T^{(1)},0} = 1$ and $y_{T^{(1)},0}^{(i)} \geq 0$, for $i = 2,3$, then both remaining agents decide immediately, ${d}^{(i)}_{T^{(1)},1} = 1$, and the evidence accumulation process stops.  We therefore only examine cases where (i) $y_{T^{(1)},0}^{(i)} < 0$ for $i = 2$ or $i = 3$ (but not both) and (ii) $y_{T^{(1)},0}^{(i)} < 0$ for  both $i = 2,3$. 

Before observing agent 1's decision, and after the private observation at time ${T^{(1)}}$, the beliefs of agents $i = 2,3$ are  $y^{(i)}_{{T^{(1)}},0} = \obb i{1:{T^{(1)}}}$. After observing $d_{T^{(1)},0}^{(1)} = 1$ their beliefs are
\begin{align} \label{E:firstupdate}
y^{(i)}_{{T^{(1)}},1} &= \obb i{1:{T^{(1)}}} + \log \frac{P(d_{T^{(1)},0}^{(1)} = 1, d_{{T^{(1)}}-1}^{(1)} = 0, d_{{T^{(1)}}}^{(2,3)} = 0|\hp)}{P(d_{T^{(1)},0}^{(1)} = 1, d_{{T^{(1)}}-1}^{(1)} = 0, d_{{T^{(1)}}}^{(2,3)} = 0| \hm)} \approx \obb i{1:{T^{(1)}}} + \theta.
\end{align}

Any agent who remains undecided after the update given by Eq.~\eqref{E:firstupdate} updates their belief iteratively in response to the decision information of their neighbor.

\emph{Case (i) - One Agent Undecided.} Without loss of generality, we assume $y_{T,0}^{(2)} \geq 0$ so that  $d_{T,1}^{(2)} = 1$. After observing agent 1's decision, the belief of agent 3 is $y_{T,3}^{(3)} = \ob^{(3)}_{1:T} + \theta$.  A straightforward computation following agent 2's decision gives
\begin{align}
y_{{T^{(1)}},2}^{(3)} &\approx \ob^{(3)}_{1:T^{(1)}} + \theta + \LLR ( {d}_{{T^{(1)}},1}^{(2)} = 1 | {d}_{{T^{(1)}},0}^{(1)} = 1)=  \ob^{(2)}_{1:{T^{(1)}}} + \theta 
+ R_+({T^{(3)}}), \label{E:sum1}
\end{align}
where
\[
R_+(t)  :=  \LLR (y_t^{(2)} \in [0,\theta)| y_{t}^{(2)} \in (-\theta, \theta)) = \LLR (y_t^{(2)} \in [0,\theta))
\]
since agent 2's belief must have been non-negative before observing agent 3's decision. Note that $R_+(t)< \theta$ by the following proposition: 

\begin{proposition}
\label{claim:decbound} 
Let $-\theta < a \leq b < \theta$. If agent $j$ has not made a decision and has accumulated only private information by time $T$, then
\[
\left| \LLR ( a \leq y_T^{(j)} \leq b | y_{t}^{(j)} \in (-\theta, \theta))
 \right| = \left| \LLR ( a \leq y_T^{(j)} \leq b )
 \right| < \theta .
\]
\end{proposition}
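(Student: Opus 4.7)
\textbf{Proof plan for Proposition~\ref{claim:decbound}.}

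The plan is to exploit the tautological fact that the belief $y_T^{(j)}$ is itself a log-likelihood ratio, so that trajectories ending in a prescribed band of values have tightly controlled likelihood ratios by construction. Concretely, for any sequence of private observations $\xi_{1:T}$, the conditional independence assumption gives
\[
\frac{P(\xi_{1:T} \mid H^+)}{P(\xi_{1:T} \mid H^-)} = \prod_{s=1}^T \frac{f_+(\xi_s)}{f_-(\xi_s)} = \exp\!\bigl( y_T^{(j)}(\xi_{1:T}) \bigr).
\]
Let $A = \{a \leq y_T^{(j)} \leq b\}$ and $C = \{y_s^{(j)} \in (-\theta,\theta) \text{ for all } 1 \leq s \leq T\}$. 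Since $[a,b] \subset (-\theta,\theta)$, under the standard absorbing stopping rule we have $A \subseteq C$, so $A \cap C = A$.

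The first step is to bound the unconditional $\LLR(A)$ via a change-of-measure argument. Writing $P(A \mid H^+) = \sum_{\xi_{1:T} \in A} e^{y_T(\xi_{1:T})} P(\xi_{1:T} \mid H^-)$ and using $y_T(\xi_{1:T}) \in [a,b]$ termwise, I obtain
\[
e^{a}\, P(A \mid H^-) \;\leq\; P(A \mid H^+) \;\leq\; e^{b}\, P(A \mid H^-),
\]
hence $a \leq \LLR(A) \leq b$, which already yields $|\LLR(A)| < \theta$.

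The second step is to show that conditioning on $C$ does not change the LLR under the symmetry assumption in force throughout Section~\ref{sec:3cliq}. Bayes' rule gives $\LLR(A \mid C) = \LLR(A \cap C) - \LLR(C) = \LLR(A) - \LLR(C)$. The symmetry argument of Proposition~\ref{prop:symmetric} extends directly from a single step to the whole survival event: the involution $\Phi$ sends each trajectory $\xi_{1:T} \in C$ to another trajectory in $C$ (the band of admissible beliefs is symmetric about $0$) while interchanging the probabilities under $H^+$ and $H^-$. Thus $P(C \mid H^+) = P(C \mid H^-)$, so $\LLR(C) = 0$ and the two LLRs in the proposition coincide and lie in $[a,b]$.

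The only subtle point is the inductive extension of Proposition~\ref{prop:symmetric} from a one-step survival statement to the event that no threshold crossing occurs throughout $1 \leq s \leq T$. This should follow cleanly from the involution on observation sequences (flipping each $\xi_s$ to $\Phi(\xi_s)$ flips the sign of $y_s$ for every $s$ simultaneously, so staying in $(-\theta,\theta)$ is preserved precisely because $\thp = -\thm$); the matching is a bijection, and by the symmetric measurement distribution each matched pair contributes equally to $P(C \mid H^+)$ and $P(C \mid H^-)$. Apart from spelling this out, no computation is required, and the conclusion $|\LLR(A \mid C)| = |\LLR(A)| \leq \max(|a|,|b|) < \theta$ follows.
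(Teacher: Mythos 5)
Your proof is correct and follows essentially the same route as the paper's: both arguments rest on the observation that $y_T^{(j)}$ is itself the log-likelihood ratio of the observation sequence, so every sequence landing in $[a,b]$ satisfies $e^a \leq \prod_{t}f_+(\xi_t)/f_-(\xi_t)\leq e^b$ termwise, and summing over the admissible set gives $a \leq \LLR(A) \leq b$. Your second step, justifying the equality of the conditional and unconditional LLRs via $\LLR(C)=0$ under symmetry, is a welcome extra: the paper's appendix builds the survival condition into the set $V_T(a,b)$ and leaves that equality to the symmetry argument of Proposition~\ref{prop:symmetric} rather than spelling it out.
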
 

To prove this, we assume an agent's belief lies in some subinterval $(a,b)$ of $\Theta = (-\theta, \theta)$ and bound the social information obtained from this knowledge, concluding it provides at most an increment of social evidence of amplitude $\theta$.  The proof is provided in Appendix~\ref{prop7}. 

Proposition \ref{claim:decbound} implies $R_+(t) < \theta$, but this increment in belief may be sufficient to lead to a decision by agent 1. 
 We will estimate  $R_+(t)$ in arbitrarily large cliques in section~\ref{sec:cliq}. {In particular, $R_+(t)$ may be computed explicitly in the same way as $S_{\pm}(t)$.}
 
\emph{Case (ii): Two Agents Undecided.}
If $y_{{T^{(1)}},0 }^{(i)} < 0$ for  $i = 2,3$, then  both agents remain undecided upon observing agent 3's decision. After each observes this absence of a decision in its counterpart, it follows from the computations that lead to Eq.~(\ref{E:sum1}) that each updates its belief as
\begin{align} \label{E:sum3}
y_{{T^{(1)}}, 2}^{(i)} &\approx \ob^{(i)}_{1:{T^{(1)}}} + \theta + R_-({T^{(1)}}) 
\end{align}
where 
$R_- (t) \equiv   \LLR (   y_{t}^{(\neg i)} \in (-\theta, 0] )$.
Due to symmetry, the new social information is equal for both agents.

Note that $R_- (t) \leq 0$ and also $|R_-| (t) < \theta$, as shown in Proposition~\ref{claim:decbound}.  Therefore Eq.~\eqref{E:sum3} shows that this second belief update cannot lead to a decision, and  $d^{(i)}_{{T^{(1)}},2} = 0$ for $i = 2,3$. {After this second step, the agents cannot obtain further social information and proceed with their private measurements.}

At the end of this exchange, both remaining observers know that the belief of the other is in the non-symmetric interval, $y_{{T^{(1)}},2}^{(\neg i)} \in \big(R_-({T^{(1)}}), \theta +R_-({T^{(1)}})\big)$.   Therefore, future non-decisions become informative, and equilibration follows each private observation as in the case of asymmetric decision thresholds discussed in Section~\ref{sec:bidsymm}.  {One key difference  is that in the present case the undecided agents have been influenced by the decider, and the asymmetry  is determined by the positions of the stochastic belief trajectories at the time of the decision.}

\emph{Concurrent Decisions.} If the first decision is made simultaneously by two agents, the remaining agent receives independent social information from both.  When the two deciding agents disagree, the social information they provide cancels. If the two agents agree on $H^{\pm}$, the undecided agent increases its belief by $\pm 2\theta$ and follows the decision of the other two.  

The exchange of social information increases the probability that all three agents reach a correct decision (Fig.~\ref{3cliquess}a), and decreases the time to a decision, both of a single agent and all agents in the clique (See Fig.~\ref{3cliquess}b). 
This is particularly pronounced when the SNR of private measurements is low.  With highly informative private observations, social information becomes less important.


\begin{figure}[t!]
\centering
\includegraphics[width = 12cm]{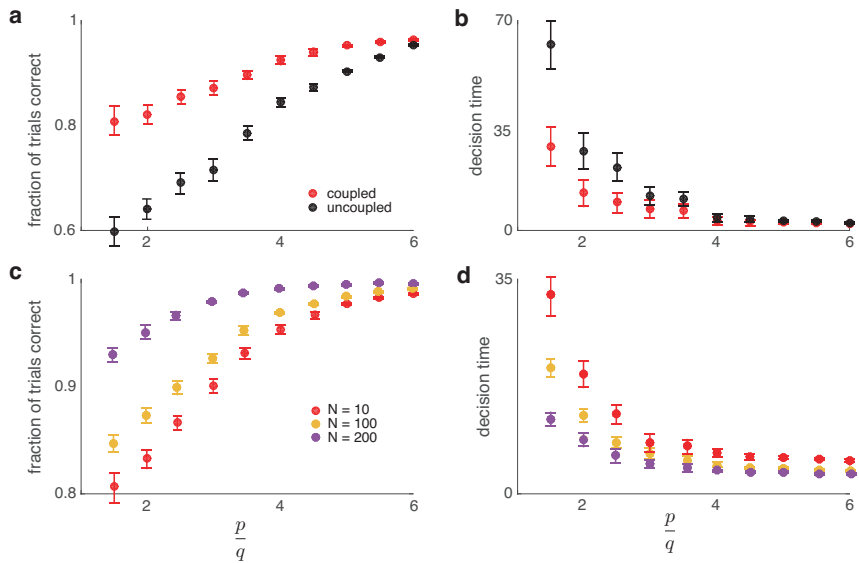}
\caption{(a) In a three-agent clique the probability that all three agents make the correct choice is larger when social information is exchanged. (b) The average time it takes all three agents decide is smaller when the agents communicate, since social information pushes agents closer to threshold. Both effects are more pronounced when the SNR of private evidence is low, \emph{i.e} $p/q$ is small. (c) As the number of agents in a clique, $N$, increases, the probability that all  agents in the clique make the correct choice grows.  The difference is more pronounced when SNR is low.  (d) Larger cliques provide more social information reducing the average decision times. Here, $|\theta_{\pm}| = 30$, and $p,q$ are defined as in Section~\ref{sec:4state}. }
\label{3cliquess}
\end{figure}

\section{Larger Cliques}
\label{sec:cliq}

We  next consider a clique of $N$ agents who all have identical symmetric decision thresholds and measurement distributions.  No social information is exchanged until one or more agents makes a choice at time $T$. We focus on the case of a single agent making the first decision. The case in which two or more agents simultaneously make the first decision can be  analyzed as in the three agent clique, and it often leads to all agents deciding subsequently.  We do not discuss it further.

Without loss of generality, we assume  agent 1 makes the first decision and chooses $\hp$.
This means $y_{T^{(1)},0}^{(1)} \geq \theta$, and thus every other agent, $i$,  updates their belief to 
\[
y_{{T^{(1)}}, 1}^{(i)} = y_{{T^{(1)}},0}^{(j)} + \LLR (d_{{T^{(1)}},0}^{(1)} = 1 ) \approx y_{{T^{(1)}}, 0}^{(i)} + \theta ,  \hspace{5mm} i \neq 1 .
\] 
We assume that the excess information, $\epsilon_{T^{(1)}}^+$ (See Proposition~\ref{prop:bump} for definition), provided by a decision is negligible.

Upon observing a decision, the remaining agents stop making private observations and exchange social information until no further social information is available. Observing $d_{{T^{(1)}},0}^{(1)} = 1$ leads any agent $i$ with belief $y_{{T^{(1)}}, 0}^{(i)} \geq 0$ to  the same decision. We denote the set of these agents by $A_1$ and call these the \defi{agreeing agents}.  We will see that there can be multiple waves of agreeing agents. The agents whose beliefs satisfy $y_{{T^{(1)}}, 0}^{(i)} < 0$ update their belief to $0 < y_{{T^{(1)}}, 1}^{(i)} < \theta$ but do not make a decision.  We denote the set of these agents by $U_1$ and call these \defi{undecided agents}.  

The decisions of agreeing agents are conditionally independent given the observed decision of agent 1. Thus, each agreeing agent independently provides additional evidence for $\hp,$ while each undecided agent provides evidence for $\hm$. As in the case of three agents, the social information provided by an agreeing agent is $R_+({T^{(1)}})$ and for an undecided agent  is $R_-({T^{(1)}}) = - R_+({T^{(1)}}),$ where $R_+({T^{(1)}})$ is given in Eq.~(\ref{posprob}). The equality follows from our assumption of symmetry, as we now show.

\begin{proposition}
Assume agent $i$ has not received any social information at time $t$, so their belief, $y_t^{(i)},$ is based on private information. Also assume that the decision thresholds, $\theta_{\pm} = \pm \theta,$ and measurement distributions are symmetric. Let
\begin{align*}
R_{+}(t) : =   \LLR (y_t^{(i)} \in [0, \theta)) \qquad \text{ and } \qquad
 R_{-}(t) : =  \LLR (y_t^{(i)} \in (-\theta,0] ).
\end{align*}
Then $R_-(t) = -R_+(t)$.
\end{proposition}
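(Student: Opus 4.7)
The plan is to leverage the symmetry assumption (Definition~4.3), which in the setting of this example gives an involution $\Phi: \Xi \to \Xi$ (with $\Phi(\xi)=-\xi$ in the standard case) satisfying $f_+(\xi) = f_-(\Phi(\xi))$. The first key observation is that this relation forces the per-observation log-likelihood increments to flip sign under $\Phi$:
\[
\LLR(\Phi(\xi)) \;=\; \log \frac{f_+(\Phi(\xi))}{f_-(\Phi(\xi))} \;=\; \log \frac{f_-(\xi)}{f_+(\xi)} \;=\; -\LLR(\xi).
\]
Applying $\Phi$ componentwise to a private-observation trajectory $\xi_{1:t}^{(i)}$ therefore negates the accumulated belief, i.e.\ $y_t^{(i)}(\Phi(\xi_{1:t}^{(i)})) = -y_t^{(i)}(\xi_{1:t}^{(i)})$.

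Next I would show that $\Phi$ (applied componentwise) induces a measure-preserving bijection on observation sequences that swaps the two conditional laws. Using conditional independence of the private observations,
\[
P(\xi_{1:t}^{(i)} \mid \hp) \;=\; \prod_{s=1}^{t} f_+(\xi_s) \;=\; \prod_{s=1}^{t} f_-(\Phi(\xi_s)) \;=\; P(\Phi(\xi_{1:t}^{(i)}) \mid \hm).
\]
Combined with the sign flip of the belief, this gives the distributional identity: the law of $y_t^{(i)}$ under $\hp$ equals the law of $-y_t^{(i)}$ under $\hm$. Because the reflection maps $[0,\theta)$ onto $(-\theta,0]$, this yields
\[
P(y_t^{(i)} \in [0,\theta) \mid \hp) \;=\; P(y_t^{(i)} \in (-\theta,0] \mid \hm),
\]
and the symmetric identity with $\hp$ and $\hm$ swapped. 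Substituting these into the definitions of $R_\pm(t)$ shows that the numerator of $R_+(t)$ coincides with the denominator of $R_-(t)$ and vice versa, which immediately gives $R_-(t) = -R_+(t)$.

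The only subtle point to watch is the endpoint convention at $y_t^{(i)} = 0$, since this value lies in both intervals $[0,\theta)$ and $(-\theta,0]$. Because $\Phi$ fixes the zero belief (every trajectory with $y_t^{(i)}=0$ is mapped to another such trajectory, and they are equally likely under $\hp$ and $\hm$ by symmetry), the shared mass on $\{0\}$ contributes identically to both sides and does not break the argument. Likewise, if one prefers the conditional form $R_+(t) = \LLR(y_t^{(i)} \in [0,\theta)\mid y_t^{(i)} \in (-\theta,\theta))$ used earlier in Section~\ref{sec:3cliq}, the conditioning set $(-\theta,\theta)$ is itself $\Phi$-invariant, so the reflection identity carries through unchanged. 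I do not anticipate a genuine obstacle here; the proof is a short symmetry argument once the action of $\Phi$ on beliefs has been set up.
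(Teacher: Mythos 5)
Your proof is correct and follows essentially the same route as the paper's: both exploit the involution $\xi \mapsto \Phi(\xi)$ to build a probability-swapping, belief-negating bijection between the sets of surviving trajectories ending in $[0,\theta)$ and in $(-\theta,0]$, then read off $R_-(t) = -R_+(t)$ from the definitions. Your explicit attention to the endpoint at $y_t^{(i)}=0$ is a minor refinement the paper glosses over, but it does not change the argument.
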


\begin{proof}
Following from the argument in Proposition \ref{claim:decbound}, we can compute
\begin{align*}
P(y_{t}^{(i)} \in [0,\infty) | H^{\pm}) = \sum_{V_T^+} \prod_{t=1}^T f_{\pm}( \xi_t^{(i)}),
\end{align*}
where $V_T^+ =  \{ \xi_{1:T}^{(j)} \in \Xi^T 
: \obb j {1:s}\in (-\theta, \theta ), \text{ for } s \leq T, \; \obb j {1:T}\in [0,\infty) \}
. $
By symmetry, we know that for any $\xi_{1:t}^{(i)} \in V_T^+$, there exists $- \xi_{1:t}^{(i)} \in V_T^-$ where
$
V_T^- =  \{ \xi_{1:T}^{(j)} \in \Xi^T 
: \obb j {1:s}\in (-\theta, \theta ), \text{ for } s \leq T, \; \obb j {1:T}\in (-\infty,0] \}
 $
and vice versa. Therefore we can write
\begin{align*}
P(y_{t}^{(i)} \in (-\infty, 0] | H^{\pm}) &=  \sum_{V_T^-} \prod_{t=1}^T f_{\pm}( \xi_t^{(i)}) = \sum_{V_T^+}   \prod_{t=1}^T f_{\pm}(-\xi_t^{(i)}) = \sum_{V_T^+}  \prod_{t=1}^T f_{\mp}(\xi_t^{(i)}) = P(y_{t}^{(i)} \in [0,\infty) | H^{\mp}),
\end{align*}
where the penultimate equality holds since $f_+(\xi) = f_-(- \xi)$. Therefore,
\begin{align*}
R_-(t) &=  \LLR (y_t^{(i)} \in (-\theta,0] ) = \log \frac{P(y_t^{(i)} \in [0, \theta) | H^-)}{P(y_t^{(i)} \in [0, \theta) | H^+)} = - \LLR (y_t^{(i)} \in [0, \theta) ) = -R_+(t).
\end{align*}
\end{proof}

Note that $N = a_1 + u_1 + 1$, where $a_1$ and $u_1$ are the number of agreeing and undecided agents in the first wave following the decision of agent 1 at time  $T^{(1)}$.  {To compute the evidence update for undecided agent $j$ after the first wave, note that
\begin{align*}
P(d_{{T^{(1)}},0}^{(1)} = 1, d_{{T^{(1)}}, 1}^{(i)} , d_{{T^{(1)}}, 1}^{(j)} |  H) & = P( d_{{T^{(1)}}, 1}^{(i)} , d_{{T^{(1)}}, 1}^{(j)} |d_{{T^{(1)}},0}^{(1)} = 1,  H)  P(d_{{T^{(1)}},0}^{(1)} = 1 | H) \\
& = P( d_{{T^{(1)}}, 1}^{(i)}  |d_{{T^{(1)}},0}^{(1)} = 1,  H) P(d_{{T^{(1)}}, 1}^{(j)} |d_{{T^{(1)}},0}^{(1)} = 1,  H)  P(d_{{T^{(1)}},0}^{(1)} = 1 | H)
\end{align*}
for any pair of agents $i \neq j$ different from 1. Therefore, after observing this first wave of decision, all remaining undecided agents update their belief as
\begin{align*}
y_{{T^{(1)}}, 2}^{(j)} = y_{{T^{(1)}}}^{(j)} & +  \LLR (d_{{T^{(1)}},0}^{(1)} = 1)  + \sum_{i \in A_1} \LLR (d_{{T^{(1)}}, 1}^{(i)} = 1 | d_{{T^{(1)}},0}^{(1)} = 1) \\
& + \sum_{k \in U_1-\{j\}} \LLR (d_{{T^{(1)}}, 1}^{(k)} = 0 | d_{{T^{(1)}},0}^{(1)} = 1) .
\end{align*}}
By conditional independence, this simplifies to
\begin{align*}
y_{{T^{(1)}}, 2}^{(j)} &= y_{{T^{(1)}},0}^{(j)} + \theta + a_1  \LLR (0 \leq y_{{T^{(1)}},0}^{(i)} < \theta) + (u_1-1) \LLR ( -\theta < y_{{T^{(1)}},0}^{(k)} < 0)\\
&= y_{{T^{(1)}},0}^{(j)} + \theta + a_1  R_+ + (u_1-1)R_-,
\end{align*}
where $R_{\pm}$ are defined as in Eqs.~(\ref{E:sum3}) and ~(\ref{posprob}).
By symmetry, $R_-({T^{(1)}}) = -R_+({T^{(1)}})$.  Thus, all undecided agents update their belief as
\begin{equation}
y_{T,2}^{(j)} =y_{T,0}^{(j)} + \theta + (a_1 - (u_1 -1)) R_+({T^{(1)}}), \qquad j \in U_1.
\label{cliquejump}
\end{equation}
{Note that each undecided agent observes all $a_1$ agreeing agents and all other  $u_1-1$ undecided agents.}

 We will see that multiple rounds of social evidence exchange can follow.  Let 
$$
R\big(t,(a,b)\big) : =  \LLR (y_t^{(j)} \in (a,b)| y_{t}^{(j)} \in (-\theta, \theta)) =  \LLR (y_t^{(j)} \in (a,b)). 
$$
Note that $|R\big(t,(a,b)\big)|  \leq \theta$ by Proposition~\ref{claim:decbound} and that $R\big(t,[0,\theta)\big) = R_+(t)$, $R\big(t,(-\theta,0]\big) = R_-(t)$.

Each remaining undecided agent has now observed the decision of the $a_1$ agreeing agents and the indecision of the other $u_1-1$ undecided agents, excluding itself.  Eq.~(\ref{cliquejump}) implies several possibilities for what these agents do next: \\[0ex]
\begin{enumerate}
\item If the number of agreeing agents, $a_1$, exceeds the number of undecided agents, $u_1$, and satisfies $(a_1-(u_1-1)) R_+ ({T^{(1)}})\geq \theta$, all the remaining undecided agents, $j \in U_1,$ go along with the decision of the first agent and  choose $\hp$. The second wave of agreeing agents encompasses the remainder of the network.
\item If the number of undecided agents, $u_1$, exceeds the number of agreeing agents, $a_1$, by a sufficient amount, $(a_1 -(u_1-1)) R_+({T^{(1)}}) \leq - 2 \theta$, then  all undecided agents update their belief to  $y_{{T^{(1)}},2}^{(j)} \leq - \theta,$ $j \in U_1$. This is a somewhat counterintuitive situation:  if sufficiently many agents remain undecided after observing the first agents' choice, then, after observing each other's absence of a decision, they all agree on the opposite.  A wave of agreement is followed by a larger wave of disagreement.
\item If $- 2\theta < (a_1 -(u_1-1)) R_+({T^{(1)}}) < -\theta$, then some of the remaining agents may disagree with the original decision and choose $\hm$, while others may remain undecided. We call the set of disagreeing (contrary) agents $C_2$ and the set of still undecided agents $U_2$.  We denote the sizes of these sets by $c_2$ and $u_2$, respectively. Note that $U_1 = C_2 \cup U_2$. 

The  agents in $U_2$ know that  $y_{{T^{(1)}},0}^{(j)} \in ( - \theta, ((u_1-1)-a_1) R_+({T^{(1)}}) - 2\theta]$ for all $j \in C_2$ and   $y_{{T^{(1)}},0}^{(j)} \in (((u_1-1)-a_1) R_+({T^{(1)}}) - 2\theta,0)$ for all $j \in U_2$.  All agents in $U_2$ thus update their belief again to 
\begin{equation} \label{E:refined}
\begin{split}
y_{{T^{(1)}},2}^{(j)} &= y_{{T^{(1)}},0}^{(j)} + \theta + a_1 R_+({T^{(1)}}) + (u_2-1) R\big({T^{(1)}}, (((u_1-1)-a_1) R_+({T^{(1)}}) - 2\theta,0)\big) \\
                    &+ c_2 R\big({T^{(1)}},(- \theta, ((u_1-1)-a_1) R_+({T^{(1)}}) - 2\theta)\big).
\end{split}
\end{equation}
This update includes the social information obtained from the initial observation, $\theta$, and from the agreeing agents in the first round, $a_1 R_+({T^{(1)}})$. The last two summands in Eq.~\eqref{E:refined} give a refinement of the social information from the originally undecided agents in $U_1$.  As a result  some undecided agents in $U_2$ can make a choice.  If so, the process repeats until no undecided agents are left or no decisions occur after an update. This process thus must terminate after a finite number of steps.  This is akin to the  equilibration of social information described earlier, but it involves observed choices and occurs across the network. However,  the process depends on the private evidence accumulated by the undecided agents and thus depends on realization. 
 \item If $- \theta \leq (a_1 - (u_1 -1)) R_+({T^{(1)}}) \leq 0$, then no agents in $U_1$ make a decision, and no undecided agent obtains any further social information. They thus continue to gather private information. Symmetry is broken as they know that $y_{{T^{(1)}},0}^{(j)} \in (- \theta,0]$ for all remaining agents, $j \in U_1$. 
\item If $ 0 < (a_1 -(u_1-1)) R_+({T^{(1)}}) < \theta$, some undecided agents may choose $\hp$, and some may remain undecided. We call the first set $A_2$ and the second $U_2$, and we denote the sets' cardinality by $a_2$ and $u_2$, respectively.  In this case, $U_1 = A_2 \cup U_2$.  All undecided agents $j \in U_2$ know that $y_{{T^{(1)}},0}^{(j)} \in [ -(a_1-(u_1-1)) R_+({T^{(1)}}),0)$ for all $j \in A_2$, and  $y_{{T^{(1)}},0}^{(j)} \in (- \theta, -(a_1-(u_1-1)) R_+({T^{(1)}}))$ for all $j \in U_2$. They thus update their belief to
\begin{equation} \label{E:refined2}
\begin{split}
y_{{T^{(1)}},2}^{(j)} &= y_{T,0}^{(j)} + \theta + a_1 R_+({T^{(1)}}) + (u_2-1) R\big({T^{(1)}},(- \theta, -(a_1-(u_1-1)) R_+({T^{(1)}}))\big) \\
 & + a_2 R\big({T^{(1)}},[ -(a_1-(u_1-1)) R_+({T^{(1)}}),0)\big).
 \end{split}
\end{equation}
If some agents decide as a result, the process continues as discussed in 3. \\
\end{enumerate}

\begin{figure}
\centering
\includegraphics[width = 13cm]{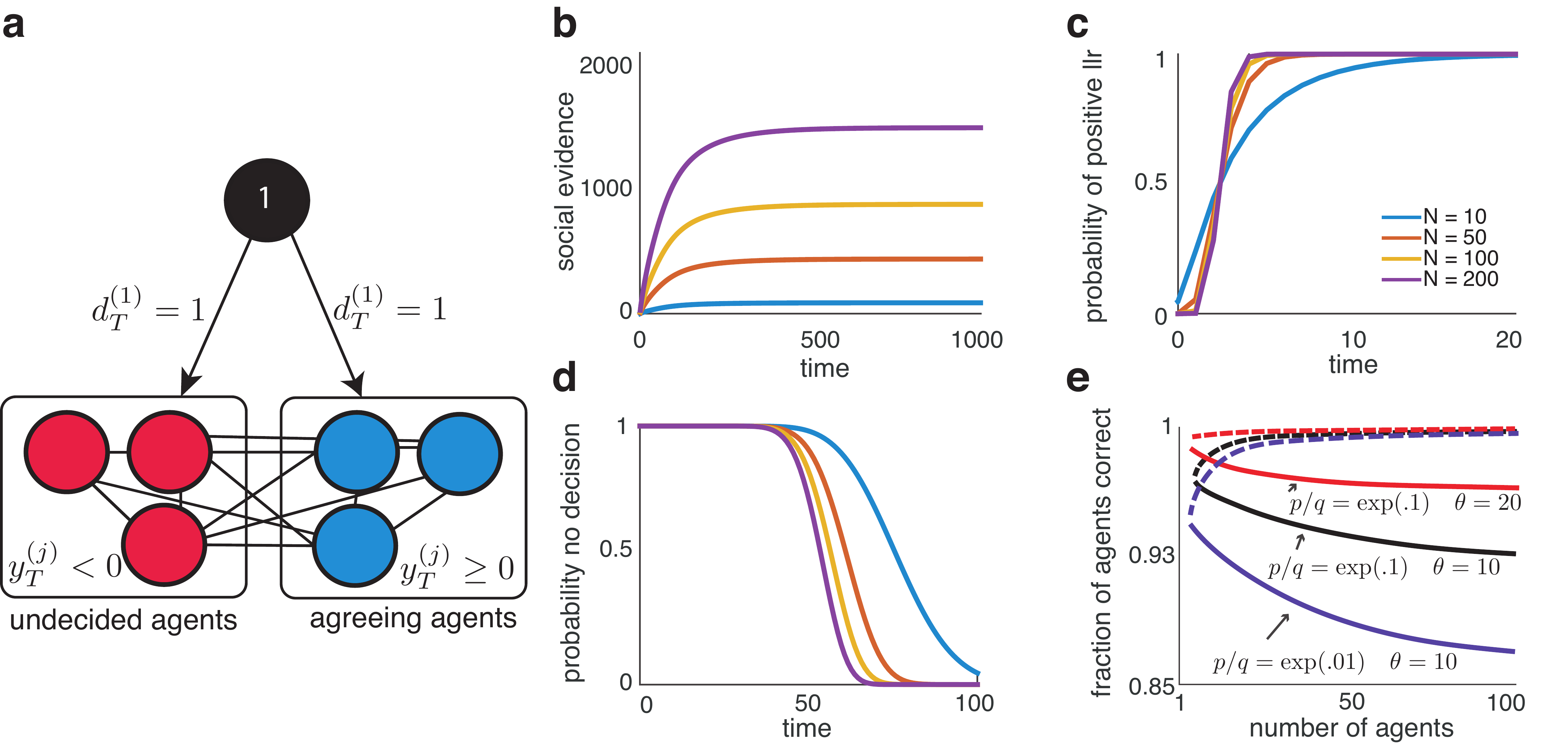}
\caption{ (a) In a clique of size $N$, agent 1 makes a decision which is observed by all other agents.  Agents whose beliefs are positive (agreeing agents),  follow the decision of this initial agent.  The remaining undecided agents continue to exchange social information.  (b)  Average social information, $E[(a_1-(u_1 - 1))R_+],$ available to undecided agents after observing the first wave of agreeing agents as a function of the first decision time for different clique sizes.  (c) The probability that the majority of the group has positive belief as a function of the first decision time for different clique sizes.  This is equivalent to the probability that the majority of the clique consists of agreeing agents when the first decision occurs.  (d) The probability that no agent has made a decision as a function of decision time for different clique sizes.  Here, $|\theta_{\pm}| = 20$, $p= e/10$, and $q = 1/10$. (e) The fraction of agents choosing correctly after the first wave following a decision (solid) and after equilibration (dashed) as a function of clique size for various values of $p/q$ and $\theta$.}
\label{cliques}
\end{figure}


To summarize, before any agent in a clique makes a decision, each agent accumulates private evidence independently.  A given agent will make the first decision, say $\hp$. All other agents with positive evidence will follow this choice in a first wave of agreement.  Undecided agents modify their evidence according to Eq.~(\ref{cliquejump}).  How many agents agree $a_n$, disagree $c_n$, or remain undecided $u_n$ in exchange $n$ depends on the numbers of agreeing, disagreeing, and undecided agents from previous steps: $a_{1:n-1}$, $c_{1:n-1}$, and $u_{1:n-1}$.

\subsection*{Large System Size Limit}

In Fig.~\ref{cliques} we plot information about the behavior of cliques of various sizes, hinting at trends that emerge as $N \to \infty$.  As the size of the clique grows, so does the amount of information available to undecided agents after the first wave of decisions, {\em{i.e.}} the members of the set $U_1$ (See Fig.~\ref{cliques}b).  As clique size grows, the first decision occurs earlier (Figs.~\ref{cliques}c-d).  In particular, as $N$ grows, the first decision time approaches $\gamma \equiv \theta/\log(p/q)$--the minimum number of steps required to make a decision.  Moreover,  most agents accrue evidence in favor of the correct decision at an earlier time.  By the time the first decision occurs, the majority of the clique will be inclined toward the correct decision.  The ensuing first decision, if it is correct, will immediately cause the majority of the clique to choose correctly.  However, note that as clique size grows, the probability that the initial decision is incorrect also grows.  It is therefore not obvious that the asymptotic fraction of the clique that chooses correctly approaches 1:  if the initial decision is incorrect, the majority of the clique will be undecided after incorporating social information from the initial decision (See Fig.~\ref{cliques}c).  

However, the social information exchange described above can still lead the remainder of the clique to overcome the impact of the wrong initial decision.    {To see this, note that as $N \to \infty$, a fraction $p^\gamma$ of the agents will choose $\hp$ and another fraction $q^\gamma$ of agents will choose $\hm$ simultaneously at time $\gamma$.  As all belief trajectories are independent until this time, all undecided agents observing these decisions will  increase their belief by $\Gamma \approx (p^\gamma - q^\gamma)N\theta$, with the approximation becoming exact in the limit of large $N$.  Since $p^{\gamma} > q^{\gamma}$, for sufficiently large $N$, with high probability $\Gamma > 2\theta$, implying that all undecided agents at time $\gamma$ choose $\hp$.}

\section{Discussion}
\label{sec:discussion}

Our evidence accumulation model shows how to best combine streams of private and social evidence to make decisions. There has been extensive work on the mathematics of evidence accumulation by individual agents~\cite{Wald1948,Busemeyer1993,Bogacz2006,veliz16,radillo17}. The resulting models have been very successful in explaining experimental data and the dynamics of decision making~\cite{Usher2001,Gold2007}, and there is evidence that they can explain the formation of social decisions~\cite{Krajbich2015}.  In parallel, economists have developed mathematical models of networks of rational decision makers~\cite{Goyal2012,mueller2013general}.  However,  economists predominantly consider static models as the norm, and  the temporal dynamics of decisions are not studied explicitly. In contrast, temporal dynamics have been extensively studied in the psychology literature but predominantly in the case of subjects making decisions based on private evidence~\cite{Luce86} (although see~\cite{Bahrami2010,Krajbich2015,haller2018}).  Our model thus 
  provides a bridge between approaches and ideas in several different fields.
  
   The  belief dynamics of agents we considered differs from past accounts in several respects. First, we assume agents only communicate decision information. In this case, coupling between agents is nonlinear and depends on a threshold-crossing process. We have shown that agents can exchange social evidence  even in the absence of a decision when their measurement distributions or decision thresholds are not symmetric. Observing the decision state of a neighbor may result in further exchanges of social information with  other neighbors.    This can lead to complex computations, in contrast to the model discussed in~\cite{kimura2012group}, where a central agent sees all actions and optimizes a final decision assuming all other agents accumulate evidence independently.

 The beliefs  of evidence-accumulators in a network have been modeled by diffusively coupled drift-diffusion equations~\cite{Srivastava2014, poulakakis2012node}. Such models approximate continuous information-sharing between agents~\cite{Srivastava2014}.  However,  it is not always clear when linear coupling between neighbors approximates the normative process of evidence accumulation and exchange between rational agents. 
In  a related model,  an agent observing a choice was assumed to experience a jump in its belief~\cite{caginalp2017decision}. This study examined how the size of this jump affects the decisions of the collective. While 
not normative, this model  provides insight into decision making when agents over- or under-weight available evidence~\cite{Beck:2012}.

The absence of choice is informative only in the case of asymmetries.  This could be due to one choice being more rewarding than another~\cite{mulder2012,hanks2011,voss2004}.
For instance, the random dot motion task, which requires subjects  to determine the dominant direction of randomly moving dots, has been extensively used in psychophysical and neurobiological studies. In this task subjects set their decision thresholds symmetrically when the reward and frequency of each choice is symmetric~\cite{shadlen1996,Gold2007}. However, when one drift direction is more likely or rewarded more, subjects exhibit accuracy and response time biases consistent with an asymmetric prior or thresholds~\cite{mulder2012,hanks2011}.
Recordings suggest neural activity representing decisions is dynamically modulated during the evidence accumulation process to reflect this bias~\cite{hanks2011,platt1999}. This is notably different from the static offsets apparent in the Bayesian model, but further investigation is needed to determine how priors are represented across the entire decision-making system~\cite{summerfield2014}. 

In our model, unless threshold asymmetries are large, the social information communicated by the indecision of a single agent is weak. However, in large social networks the totality of such evidence provided by all other agents can be large compared to the private information obtained by an agent. Therefore, in large networks even small asymmetries can produce social information from indecisions that strongly influences decisions.  Consider a new, desirable product that hits the market. If a large fraction of a social network is indecisive about purchasing the product, this can communicate potential issues with the product. This signal could be particularly strong if the product is  particularly desirable upon its release.  

We assumed that all agents implement the same decision thresholds $\theta_{\pm}$ and that they all know this. 
This assumption can be relaxed in different ways:  agents could use different decision thresholds $\theta_{\pm}^{1:N},$ which may or may not be known to the other agents in the network.  When thresholds are known, much of  our analysis carries over, with agents updating their belief in response to  observing a decision state depending on who they observe.  This increases the notational complexity, but the opinion exchange dynamics
remain largely the same. When thresholds are not known, agents can also update their belief about, \emph{i.e.} the posterior over, the possible decision thresholds of their neighbors over multiple decisions.
Indeed, finding out which of your friends are rash to make decisions and which ones tend to be cautious is essential to properly weighing their opinions. However, this requires observations and interactions over multiple decisions--something we do not consider here. 
In addition, there is evidence that humans tend to treat others as if they make decisions in similar ways~\cite{Bahrami2010}. 
How humans learn each other's decision thresholds and how such learning  impacts decision making has not been extensively explored and provides 
fruitful avenues for  future work.

{
We have assumed that each observer maintains a constant, predetermined threshold while accumulating evidence. In many situations this is likely not the best policy.  We could extend our modeling framework to thresholds that can change in response to accumulated private and social evidence as well as time pressure. Such approaches have been developed to describe single ideal observers accumulating evidence in uncertain environments and rely on dynamic programming to define the agent's decision policy~\cite{drugowitsch12}. This can result in decision thresholds that vary in time and that can be realization-dependent~\cite{drugowitsch14}.  In a social network, to optimize their own threshold, every agent needs to model the process by which other agents choose their threshold. This can lead to the need for recursive reasoning to set the thresholds appropriately. In practice one can use the setting of partially observed Markov processes to implement such computations~\cite{khalvati2019}.
}

Our Bayesian model is unlikely to accurately describe the decision making process of biological agents. The rationality of humans and other animals is bounded~\cite{simon1990}, while some of the computations that we have described are quite complex and provide only marginal increases in the probability of making a correct choice.   Thus biological agents are likely to perform only those computations -- perhaps approximately -- that provide the largest impact~\cite{Couzin2009}.  A normative model allows us to identify which computations are important and which offer only fractional benefits. 


Our analysis  relied on the assumption that private observations received by the individual agents are independent. When this is not the case agents need to account for correlations between private observations in determining the value of social information. In general, we expect that an increased redundancy of neighboring agents' observations  reduces the impact of social information. However, experimental evidence suggests that humans do not take into account such correlations when making decisions~\cite{enke2013correlation,levy2015correlation} and behave as if the evidence gathered by different members of their social group  is independent.  {Indeed, it can be shown that properly integrating social information from unseen agents can be NP-hard, although the computations can be tractable in some networks~\cite{hkazla2017bayesian,hazla2018reasoning}}.  Our model may approximate the decision-making process of agents that make such assumptions. The impact of ignoring such correlations when they are present is an interesting avenue for future work~\cite{stolarczyk17}.

Our findings are distinct from but related to previous work on herding and common knowledge. In this modeling work agents typically make a single private measurement, followed by an exchange, or propagation of social information~\cite{banerjee1992,milgrom1982}. In recurrent networks, agents can announce their preference for a choice, until they reach agreement with their neighbors~\cite{aumann1976,Geanakoplos1982,geanakoplos1994,mossel2014opinion}. This framework can be simplified so that agents make binary decisions, based solely on their neighbors' opinion, admitting asymptotic analyses of the cascade of decisions through networks with complex structure~\cite{Watts2002}. In contrast, we assume that agents accumulate private and social evidence and make irrevocable decisions. {The resulting decision-making processes are considerably different: for instance, in our case there is no guarantee that social learning occurs. Combining private and social evidence also makes it difficult to derive exact results, but we expect asymptotic formulas are possible for large cliques with simpler assumptions on the decision process.}


%
%

\section*{Acknowledgements}
This work was supported by an NSF/NIH CRCNS grant (R01MH115557) and an NSF grant (DMS-1517629). BK and KJ were supported by NSF grant (DMS-1662305). KJ was also supported by NSF NeuroNex grant (DBI-1707400). ZPK was also supported by an NSF grant (DMS-1615737).

\appendix 

\section{Bounding social information from a decision in a recurrent two-agent network}
\label{bnd2coup}
Here we prove Proposition~\ref{prop:bump2} using an argument similar to that  in Proposition~\ref{prop:bump}.  First, consider the case in which $n = 0$, so the private observation $\xi_{T^{(i)}}^{(i)}$ triggers the decision. If $d_{T,0}^{(i)} = + 1$, we know
\begin{align*}
y_{T,0}^{(i)} = \obb{i}{1:T} + \soc_{T-1} \geq \thp ,
\end{align*}
and since $ \obb{i} {1:T-1} + \soc_{T-1}  < \thp$, then
\begin{align*}
y_{T^{(i)},0}^{(i)} = \obb{i} {1:T} + \soc_{T-1} < \thp + \LLR (\xi_{T}^{(i)}).
\end{align*}
Marginalizing over all chains ${\mc C}_{+}(T,0)$ such that $y_{s,m}^{(i)} \in \Theta$ for all $0 < s < T$, and corresponding $0 \leq m \leq N_s$, preceding the decision $d_{T,0}^{(i)} = + 1$ at $(T,0)$, we thus find
\begin{align*}
\theta_+ - \soc_{T-1,0} \leq \LLR (d_{T,0}^{(j)} = 0, d_{T-1,N_{t-1}}^{(i)} = 0, d_{T,0}^{(i)} = 1)< \theta_+ - \soc_{T-1,0} + \ep_{T,0}^+,
\end{align*}
where $N_{T-1}$ is the maximal substep in the social information exchange following the private observation at timestep $T-1$.

On the other hand, suppose $d_{T,n}^{(i)} = + 1$, and $0 < n \leq N_{T}$ so that the decision is reached during the social information exchange following an observation. Then,
\begin{align*}
\obb{i}{1:T} + \soc_{T-1,n-1}^{(i)} < \theta_+,
\end{align*}
which implies
\begin{align*}
\theta_+ \leq y_{T,n}^{(i)} = \obb{i}{1:T} + \soc_{T-1,n-1} +(\soc_{T,n} -  \soc_{T,n-1})   < \theta_+ +  (\soc_{T,n} -  \soc_{T,n-1}).
\end{align*}
Marginalizing over all chains ${\mc C}_+(T,n)$ such that $y_{s,m}^{(i)} \in \Theta$  for all $0 < s \leq T$, and corresponding $0 \leq m \leq N_s$, preceding the decision at $(T,n)$,
\begin{align*}
\theta_+ - \soc_{T,n} \leq \LLR ( d_{T^{(i)},n}^{(j)} = 0, d_{T^{(i)},n-1}^{(i)} = 0, d_{T^{(i)},n}^{(i)} = 1) < \theta_+ - \soc_{T,n} + \ep_{T,n}^+,
\end{align*}
where
\begin{align*}
\ep_{T,n}^+
\leq (\soc_{T,n} -  \soc_{T,n-1}).
\end{align*}
Following the arguments in Theorem 5.1, we note then that
\begin{align*}
y_{T,n+1}^{(j)} & = \LLR (\xi_{1:T}^{(j)}, d_{1:T,0:n}^{(j)} = 0, d_{1:T,0:n-1}^{(i)} = 0, d_{T,n}^{(i)} = \pm 1) \\
& = \LLR (\xi_{1:T}^{(j)}) + \LLR ( d_{T,n}^{(j)} = 0, d_{T,n-1}^{(i)} = 0, d_{T,n}^{(i)} = -1) \\
& =   \obb {j}{1:T} + \soc_{T,n+1}^+.
\end{align*}
The proof follows similarly for the case $d_{T,n}^{(i)} = - 1$.





\section{Proof of Proposition~\ref{claim:decbound}}
\label{prop7}  To simplify the proof we assume that a sequence of private observations results in a belief (LLR)
that can lie exactly on the decision threshold $\theta$, as in the examples in the text. 
Define the subset $V_T(a,b) \subset \Xi^T$ of the product space of observations $\Xi^T$ consisting of  observation sequences that result in a belief contained in $[a,b]$ at time $T$ :
\[
V_T(a,b) = 
\{ \xi_{1:T}^{(j)} \in \Xi^T 
: \obb j {1:t}\in (-\theta, \theta ), \text{ for } t \leq T, \; \obb j {1:T}\in [a,b] \}
. \]
By definition, we can write
\[
e^a \leq \prod_{t = 1}^{T} \frac{f_+ ( \xi_t^{(j)})}{f_-( \xi_t^{(j)})} \leq e^b, \hspace{5mm} \forall \xi_{1:T} \in V_T(a,b),
\]
which can be rearranged as
\[
e^a \prod_{t=1}^T f_-( \xi_t^{(j)}) \leq \prod_{t = 1}^{T} f_+ ( \xi_t^{(j)}) \leq e^b  \prod_{t=1}^T f_-( \xi_t^{(j)}), \hspace{5mm} \forall \xi_{1:T}^{(j)} \in V_T(a,b).
\]
Summing over all $\xi_{1:T}^{(j)} \in V_T(a,b)$ then yields
\begin{align*}
e^a \sum_{V_T(a,b)} \prod_{t=1}^T f_-( \xi_t^{(j)}) \leq \sum_{V_T(a,b)} \prod_{t = 1}^{T} P_+( \xi_t^{(j)}) \leq e^b  \sum_{V_T(a,b)} \prod_{t=1}^T P_-( \xi_t^{(j)}),
\end{align*}
so that by noting $P(y_{T}^{(j)} \in [a,b] | H^{\pm}) = \sum_{V_T(a,b)} \prod_{t=1}^T P_{\pm}( \xi_t^{(j)})$, and rearranging we find
\begin{align*}
e^{- \theta} < e^a \leq \frac{P(y_{T}^{(j)} \in [a,b] | H^{+})}{P(y_{T}^{(j)} \in [a,b] | H^{-})} \leq e^b < e^{\theta},
\end{align*}
which implies the desired result.

\bibliography{ms}
\bibliographystyle{siam}
\end{document}